\begin{document}
\title{Secondary Spectrum Auctions for Markets with Communication Constraints}

\author{\IEEEauthorblockN{Deepan
Palguna\thanks{Authors are with the
School of Electrical and Computer Engineering, Purdue University,
West Lafayette, Indiana 47907-2035. E-mail: \mbox{dpalguna@purdue.edu},
\{djlove,ipollak\}@ecn.purdue.edu. Tel: +1 765-496-1263, +1 765-496-6797,
+1 765-494-5916.}\thanks{
A shorter version of this paper excluding theoretical results in
Section~\ref{sec:matched_convergence}, the sections on strategic
users~\ref{sec:truthful_matched_auction}, multi-unit auctions~\ref{sec:vickrey_auction},
time-varying bids~\ref{sec:movingBidsScheme}, 
and the corresponding results in Section~\ref{sec:simulations} 
has been published in~\cite{PaLoPo13}.
}, David J. Love, and Ilya Pollak
}}
\maketitle
\pagestyle{empty}
\thispagestyle{empty}
\begin{abstract}
Auctions have been proposed as a way to provide
economic incentives for primary users to dynamically allocate
unused spectrum to other users in need of it. Previously proposed
schemes do not take into account the fact that the power
constraints of users might prevent them from transmitting their bid prices
to the auctioneer with high precision and that transmitted bid
prices must travel through a noisy channel. These schemes also
have very high overheads which cannot be accommodated in
wireless standards. We propose auction schemes where a central
clearing authority auctions spectrum to users who bid for it, while
taking into account quantization of prices, overheads in bid revelation,
and noise in the channel explicitly. Our schemes are closely
related to channel output feedback problems and, specifically,
 to the technique of posterior matching. We consider several
scenarios where the objective of the clearing authority is to award
spectrum to the bidders who value spectrum the most. We prove
theoretically that this objective is asymptotically attained by our
scheme when the bidders are non-strategic with constant bids.
We propose separate schemes to make strategic users reveal their
private values truthfully, to auction multiple sub-channels among
strategic users, and to track slowly time-varying bid prices. Our
simulations illustrate the optimality of our schemes for constant
bid prices, and also demonstrate the effectiveness of our tracking
algorithm for slowly time-varying bids.
\end{abstract}

\begin{keywords}
Secondary spectrum markets, auctions, posterior matching
\end{keywords}
\IEEEpeerreviewmaketitle
\section{Introduction}
The increasing interest in cognitive radio systems has led to the
development of the IEEE 802.22 and IEEE 802.16h standards~\cite{ieee80222,ieee80216}. These standards support some of the flexible and shared spectrum features of cognitive radios.
Both these cognitive radio standards have mechanisms for communication 
between base stations, which can enable sharing of unused spectrum among
unlicensed users who compete for it. 
In this setting, an economic incentive might be necessary for spectrum 
owners to be willing to allocate their unused spectrum to other users who are in need of it.
As a way of providing this incentive to the spectrum owner, secondary spectrum auctions 
have been proposed for dynamic spectrum allocation.

Spectrum auctions that account for interference constraints are proposed in~\cite{HuBeHo06},~\cite{JiZhZhLi09} and~\cite{GaBuCaZhSu07}. In~\cite{JiZhZhLi09} and~\cite{GaBuCaZhSu07}, computationally
efficient suboptimal schemes have been proposed to allocate multiple 
channels, with the objective of maximizing revenue. 
Online spectrum auctions, where users can bid for 
spectrum at any time when they need it, can be prone to manipulation, which might 
result in lower revenues for the auctioneer~\cite{DeZhAlZh11}. Spectrum sharing problems have been viewed from a game 
theoretic perspective in~\cite{EtPaTs07} where fairness and efficiency in 
spectrum sharing have been studied. There have also been other papers that consider 
pricing in secondary spectrum markets from a game theoretic
perspective~\cite{ScLa13,SaKa08,SeChBrCh08,ChHu10}. Auction based 
resource allocation has been studied for cooperative networks in~\cite{HuHaChPo08}.
While auctions are an economically appealing method of dynamic spectrum allocation, 
mechanisms have also been proposed in dynamic trading of spectrum contracts among 
primary and secondary users in~\cite{KaMuSaKaGu11}. 
Spectrum sharing based on contracts has also been considered 
for cooperative networks in~\cite{DuGaHu11}. There have been efforts in modeling 
secondary spectrum markets as double auction markets~\cite{XuJiLi10_1}, whose
structure closely resembles financial markets.
In~\cite{XuJiLi10_2}, the authors use a portfolio optimization approach to spectrum trading. 

The drawbacks of the schemes in the current literature
become clear when we look at the close connection between secondary 
spectrum auctions and user scheduling problems.
In a scheduling problem, a scheduler collects channel quality
information (CQI) from the users that it serves. Based on the CQI and
fairness considerations, the scheduler allocates time or frequency 
slots to users. It could, for example, assign the next channel use slot 
to the user with the best signal-to-noise ratio (SNR). 
But feedback from users in the form of instantaneous SNR would make 
such scheduling algorithms impractical as the number of users increases~\cite{HaTeAl05}. 
This is because the amount of power and bandwidth required for reliable 
feedback would be enormous. Therefore, a number papers have attempted to 
reduce feedback in user scheduling problems. One method to reduce feedback 
from users is for users to transmit quantized SNR information only if their SNR exceeds a particular threshold~\cite{GeAl04}. 
In~\cite{HaTeAl05} and~\cite{FlEdMo03}, the authors study the scenario where
the users transmit SNR information that is quantized using multiple levels.
It has been argued in~\cite{FlEdMo03} and~\cite{SaNo07} that increasing
the number of feedback bits results in diminishing improvements in throughput.

Two other challenges in user scheduling are due to latency and erroneous feedback bits. In~\cite{HaAlOi06}, the impact of latency on such schemes is analyzed, where a user could be allowed to transmit at a time slot based on outdated CQI. 
Similarly, delay is an important factor in auction design when feedback loads 
are high and the bids of the users change with time due to changing channel quality. 
The authors of~\cite{HaAlOi06} find that system performance degrades 
significantly with delay, even when the channel is slowly varying with time.
In~\cite{HaWi04}, the authors study the scheduling problem under
a practical scenario where there are errors in the feedback bits, and 
show that schemes that improve upon maximum SNR scheduling can be 
designed when 
the CQI is noisy. Literature surveys about limited feedback in wireless communications in
general, and with specific emphasis on adaptive transmission and scheduling
can be found in~\cite{LoHeLaGeRaAn08} and~\cite{ErOt07} respectively.

The challenges tackled in the user scheduling literature also affect the design of schemes for secondary spectrum auctions. In the context of both
secondary spectrum auctions and user scheduling, users have access only to finite-rate control channels, which is why they compete for spectrum in the
first place. Moreover, power constraints on mobile devices and delay requirements in the case of slowly time-varying bid prices put forth a strong
case for designing auction mechanisms where users are required to transmit a very small number of bits to the auctioneer for bid revelation and bid updates. 

In this paper, we propose several auction schemes under a scenario where there are two-way channels between the auctioneer (or clearing authority -- abbreviated as CA) and the users. Our schemes 
explicitly take into account the practical issues that arise
due to quantization requirements and noise. In such a set-up, the time period
of interest is divided into multiple rounds, where each round consists
of an update-and-allocate period and a spectrum use period. During an update-and-allocate period, each user can only transmit a small, fixed number of
bits to the CA through a noisy channel. This is because users are heavily
constrained by the available power and bandwidth.
The CA then chooses a winner for each spectrum unit under auction. Since the CA does not know the actual bids, it treats them as random variables and makes spectrum award decisions based on its estimates of these random variables. 
The winners can use the spectrum awarded to them during the next spectrum 
use period.
After the spectrum use period, the CA gets back control of the spectrum and begins the next round of allocations. A natural objective of the CA is 
to discover the true values of the users and allocate spectrum to users who value it the most as the number of auction rounds increases. 

The central contribution of our paper is a scheme which enables the 
CA to asymptotically achieve its objective of discovering the 
bid prices of the users and allocate spectrum to the highest bidder.
We moreover show that asymptotically, the CA's revenue can be made 
arbitrarily close to the highest bid.
At the beginning of an update period, the users send one bit each to 
the CA, which is a function of both the bid price and the feedback 
bits received from the CA. The bids are estimated by the CA 
from their posterior distributions, conditioned on the information
available to the CA till that round. The CA then sends two bits back to 
each user, one informing the user whether or not it is the winner for that
round, and the other informing the user about the CA's new bid 
estimate. The two bits from the CA to each user are assumed to be 
received without error due to the abundant communication resources at 
the CA's disposal. In the next round, the users reply back in the same 
fashion as before, and the process continues as long as the CA has a 
unit of spectrum to auction. 

This scheme is attractive firstly because we prove that it is asymptotically 
optimal. In other words, we prove that even under constraints
of very limited signaling and noisy transmission from the users to the CA,
our scheme guarantees that the CA asymptotically 
allocates spectrum to the highest bidder as the number 
of auction rounds increases. Secondly due to the small communication overheads, 
our scheme can be extended to handle other practical
issues like strategic bidders, auctioning multiple units of spectrum and accounting for time varying bid prices.
Our method is closely related to the technique of posterior matching \cite{ShFe07}, \cite{Ho63}, due to which we 
call it {\em matched auctioning}.
Since the CA is typically a base station that can transmit using a large amount of power, we assume that the channel from the CA to each user is noiseless (whereas the user-to-CA channels are noisy).  This assumption is critical to the optimality of our proposed scheme. In the presence of noise in the CA-to-user channels, noise would accumulate with each round and this case warrants
further investigation.
The organization and the main contributions of our paper are as follows:

{{\bf Auction design under practical communication constraints:}} We model quantization and noise for the first time in the context of 
secondary spectrum auctions and devise schemes for auctions under practical constraints. In the next section, we describe the system set-up and provide an outline
for single-unit multi-round auctions. 

{\bf{Unmatched auctioning:}} To study the behavior of the schemes in the
existing literature under communication constraints, we first 
propose a scheme to auction one unit of spectrum, where the users do 
not utilize the feedback bits from the CA to decide their future transmissions.
This motivating example is suboptimal since it does not provide any allocation
guarantees, and is described in Section~\ref{sec:unmatched}.

{\bf{Matched auctioning:}} Our central scheme to auction one spectrum unit 
among non-strategic users is described in Section~\ref{sec:matched}. 
We prove that this scheme is asymptotically optimal in the sense of
getting arbitrarily close to 
maximizing the CA's revenue and allocating spectrum to the highest 
bidder as the number of auction rounds increases. Following this, we 
propose three separate extensions accounting for other practical considerations. These extensions illustrate the importance of schemes with 
low communication overheads and the scalability of matched auctioning.

{\bf{Quantized single-unit auctions with strategic users:}}
In Section~\ref{sec:truthful_matched_auction} we propose a 
single unit auction scheme called {\em truthful matched auctioning} 
that can handle strategic users. These are non-cooperating and
rational users that attempt to maximize their payoff.
Under truthful matched auctioning, truthful bid revelation 
is weakly dominant as the number of update rounds increases. This result 
is suggested from our simulation results.

{\bf{Quantized Vickrey auctions:}}
As the second extension, we propose a scheme to simultaneously auction 
multiple units of spectrum among strategic users. Simulations of this scheme also show that truthful bid revelation is a weakly dominant strategy as the number of rounds increases.
Quantized Vickrey auctioning can be viewed as a generalization of truthful
matched auctioning, and is described in Section~\ref{sec:vickrey_auction}.

{\bf{Matched auctioning with slowly time-varying bids:}}
Constant bid prices can be a strong assumption for wireless systems. For example, a user could be a mobile device that wants to vary its bid due to changing channel conditions. 
Simulations of the scheme that we propose in Section~\ref{sec:movingBidsScheme} for this scenario show that our tracking method gives revenues 
close to the optimal revenue, and outperforms matched auctioning for a wide range of parameters that govern bid price dynamics.
Our simulation results and the conclusion are presented in Sections~\ref{sec:simulations} and~\ref{sec:conclusion} respectively.

\section{System set-up and single-unit auction scheme outline}
\label{sec:sysset}
Consider the scenario of $N$ users bidding for one unit of spectrum
that is being auctioned by a central clearing authority --- abbreviated as CA.  
The CA is a base-station and the users could be wireless devices in a particular cell, or even other base stations.
\begin{figure}[htb]
\begin{minipage}[b]{1.0\linewidth}
  \centering
  \centerline{\epsfig{figure=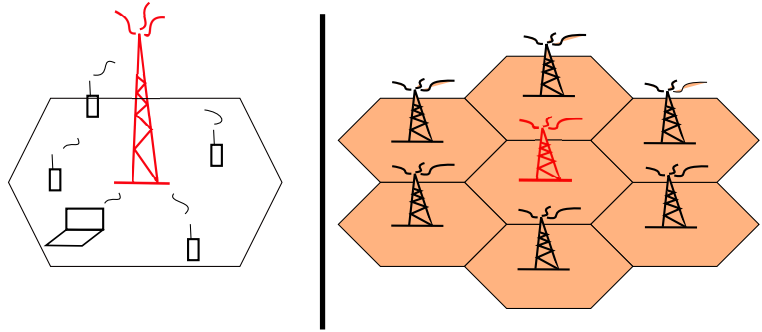,width=10cm}}
\end{minipage}
\caption{Scenarios with different kinds of users and CAs. CAs shown in red.}
\label{fig:scenario}
\end{figure}
\subsection{Definitions and preliminaries}
\label{sec:auction_definitions}
Secondary spectrum auctions are modeled as {\em private value} auctions,
in which the $i^{\text{th}}$ user attaches a value ($v_i$) to the object under auction.
These values are only known to the respective users.
Prior to receiving any information from the users, the auctioneer models these values as i.i.d. random variables. 
In spectrum auctions, this distribution models channel conditions, user requirements, and other factors which would affect the value of one spectrum
unit.
The strategy of the $i^{\text{th}}$ user is a mapping from its true value $v_i$
into a bid price i.e., $s_i(v_i) = b_i$. In a {\em standard} auction, the $i^{\text{th}}$ user will win the auction if it has the highest bid. The auctioneer then charges an ask price equal to $a$, giving the winner a payoff equal to $v_i-a$ and zero payoff for the others. 
A strategic user is one which behaves so as to maximize its payoff.
If a user is non-strategic, then its strategy is the identity function. A non-strategic user is also assumed to always be truthful and to adhere to the auction rules, even if deviating from the rules would give it higher payoffs.

A natural choice for the ask price that the winner gets charged, is the winning bid itself. Such an auction is called a {\em first price auction}. 
Bidding one's own value in a first price auction would only guarantee a payoff of zero. Therefore, in general first price auctions,
strategic bidders will not bid their true private values.
On the other hand, for a standard auction where the ask price is equal to the 
second highest bid, the strategy $s_i(v_i) = v_i$ is a {\em weakly dominant strategy}
for each user. This means that irrespective of what other users do, a user would not receive a better payoff if it did not bid truthfully. This is a good property for an auction to 
have since each user knows what to do irrespective of what other users do.
The definitions  and results introduced here are standard in the auctions literature~\cite{Kr02}.
In the next subsection, we describe the outline of our single-unit 
auction schemes. The set-up of the multi-unit auction scheme in Section~\ref{sec:vickrey_auction} is very similar to the set-up
of truthful matched auctioning. So we explain the set-up for
quantized Vickrey auctions in the corresponding
section for ease of description and clarity in conveying the main ideas.
\subsection{Single-unit auction scheme outline}
\label{sec:scheme_outline}
Depending on channel conditions, individual requirements and their strategies,
the users fix their bids as $b_1, \cdots, b_N$,
which are all assumed to lie in the interval $[0,1]$. 
We assume that there is a two-way 
channel between each user and the CA, and there is no interference between these channels. Since the CA is typically a base station with high transmit power and unutilized bandwidth that is dedicated to control, we assume that the CA-to-user channels are noiseless. The CA can award spectrum to the highest bidder of each round in one shot if the users 
could send their bids to the CA with infinite precision. But in our set-up, we consider quantization and noise constraints explicitly. This results
in the CA refining its estimate of the highest bid from round to round. 
Each round is divided into two disjoint intervals: an update-and-allocate period and a spectrum use period. This is depicted in Fig.~\ref{fig:timing_diagram}.
\begin{figure}[htb]
\begin{minipage}[b]{1.0\linewidth}
  \centering
{\epsfig{figure=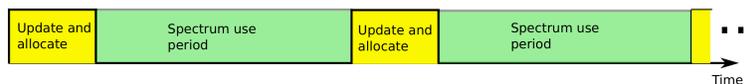,width=10cm}}
\vspace*{-0.25cm}
\end{minipage}
\caption{The CA has control of the spectrum during update-and-allocate period. Users only have control channels to communicate with the CA. The CA updates its bid estimates and decides to allot spectrum to one of the users for the following spectrum use period. Based on the updated bid price estimates, the CA updates the ask price it wants to charge the winner.
The CA gets back control of spectrum following the spectrum use
period, and the process of update-and-allocate continues as long as the CA is willing to auction spectrum.}
\label{fig:timing_diagram}
\end{figure}

During an update-and-allocate period, the users have severely constrained channels connecting them to the CA. So each of those periods is meant to refine the CA's estimate of the user's bids while using very limited signaling.
We now list the steps that take place in the $t^{\text{th}}$ update-and-allocate period for single-unit auctions.
\begin{itemize}
\itemsep0em
\item At the start of round $t$, user $i$ is allowed to send only one bit to the CA, denoted by $x_{it}$. In general, $x_{it}$ is a function of $b_i$ and all the other
information available to the user until round $t$. Due to noise in the user-to-CA channel, $x_{it}$ is received by the CA as $y_{it}$.
\item Since the CA does not know the bids, it models them as independent continuous
random variables $\{B_i\}_{i=1}^N$, each uniform over $[0,1]$.
\item Using $(y_{1t},\cdots,y_{Nt})$, and all the bits received during the previous rounds, the CA estimates each bid and awards spectrum for the corresponding spectrum use period to the user whose bid price estimate is the highest. Ties are broken arbitrarily.
\item A spectrum ask price $a_t$ is fixed by the CA based on its updated bid
estimates.
\item The CA then sends the first feedback bit $u_{it}$ to each user $i$, which is to inform the user whether it
was awarded spectrum for the following spectrum use period or not. This is given by
\begin{equation}
u_{it} = \
\begin{cases}
1 & \text{ if user $i$ won round $t$}
\\
0 & \text{ otherwise.}
\end{cases}
\label{eqn:uit}
\end{equation}
We can write this as $u_{it}=I_{\text{user $i$ won round $t$}}$, where $I_\mathcal{A}$ is the indicator function of event $\mathcal{A}$.
\item The second feedback bit sent from the CA to the user is $z_{it}=y_{it}$. This bit is sent so that all the users can perform the same updates as the CA and compute the
CA's new bid estimate. 
When users are strategic, the CA has to send a third bit, $\tilde{y}_{it}$, to enable the users to compute the ask price.
Bits sent by the CA are correctly received by 
the users due to noiseless feedback. 
\item When users are not strategic, the winner 
has the option to reject spectrum and pay nothing if the ask price is larger than its bid. If the winner exercises this option, then the CA's revenue during round $t$ would be zero, and spectrum will be unused in the following spectrum use period. Otherwise the winner will choose to accept spectrum, and the CA would get a revenue equal to the ask price $a_t$. In truthful matched auctioning though, 
winners are not allowed to reject spectrum since they are strategic.
Therefore, they always use spectrum, giving a revenue of $a_t$ to the 
CA during each round.
\item Allowing winners to reject spectrum when they are not strategic is beneficial to the winners. Although it could reduce the revenue of the CA during the
initial update rounds, we prove that the revenue converges to
a price close to the maximum
bid price in probability as the number of update rounds increases.
Not allowing winners to reject spectrum in truthful
matched auctioning tackles the problem of strategic
bidders at the cost of winners having to sometimes pay a price
larger than their bid. But we will see using simulations that as
the number of update rounds increases, the probability of winners paying an amount smaller than their bid converges to one.
\item Subsequent to the corresponding spectrum use period, the CA gets back control of the spectrum and the
users will send $x_{i,t+1}$, just like in the previous round. This procedure continues as long as the
CA has a unit of spectrum to auction. 
The steps in one update-and-allocate period for single-unit auctions are illustrated in Fig.~\ref{fig:auctionOutline}. In this section, we have left out the exact equations that are used by each scheme to compute $x_{it}$ and $a_t$.
These will be addressed in the corresponding sections. We will also
address the computation of $\tilde{y}_{it}$ in 
Section~\ref{sec:truthful_matched_auction} on truthful matched auctions.
\item In practice, the final payments can be made to the CA at the end of
the auction. When users are strategic, the CA has to remember only the
winner information and collect the corresponding ask price from the 
winners of each round. When users are non-strategic, we additionally assume
that the winners remember their usage information and pay the CA
truthfully.
\end{itemize}
\begin{figure}[htb]
\begin{minipage}[b]{1.0\linewidth}
  \centering
  \centerline{\epsfig{figure=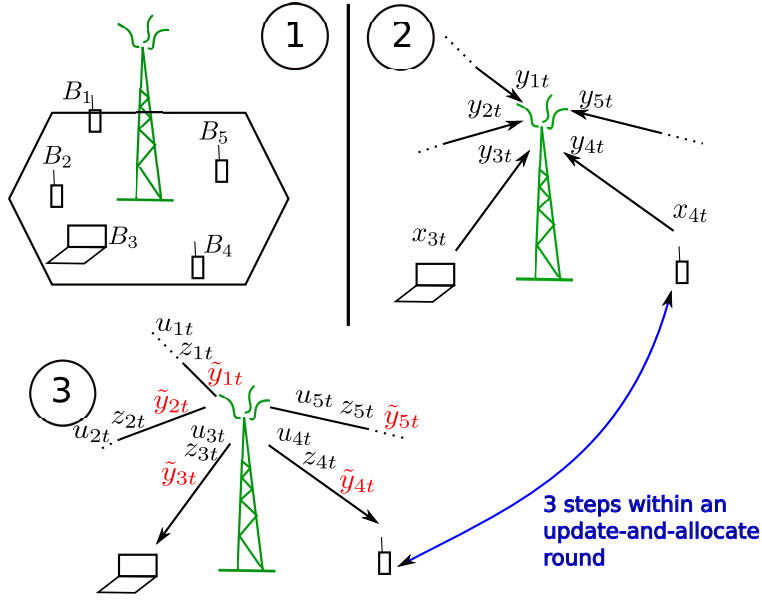,width=10cm}}
\end{minipage}
\caption{Update period outline for single-unit auctions: 
Users (the five devices in black) 
send 1 bit each to the CA (the base station in green). The bit can be a function of the user's bid and all the 
other information available to the user. The actual form of $x_{it}$ depends on the 
specific scheme. CA updates the bid and ask price estimates. It then sends the received bit and winner information back to the users.
The third bit $\tilde{y}_{it}$, highlighted in red, is only required
to convey the ask price to strategic users in truthful matched auctioning.
This bit is not needed in single-unit auctions with non-strategic bidders.}
\label{fig:auctionOutline}
\end{figure}

While the CA-to-user channels are assumed to be noiseless, 
we model the user-to-CA channels as non-interfering binary symmetric channels (BSC). 
If the input to a BSC is 1, then it will be received erroneously 
as 0 with probability $p$. Similarly, an input of 0 will be received 
as 1 with probability $p$. 
BSC$_p$ is used to denote a BSC with cross-over probability $p$. 
\subsection{Accounting and payment method}
\label{sec:accounting}
There needs to be a mechanism for the CA to get paid at the end of the time horizon 
or at the time when it decides to stop auctioning spectrum. This can be taken 
care of by the CA storing winner information and the users storing usage
information. For example, if the CA allocates spectrum to users $3,4,4,4,4,5$ 
during rounds $1-6$, then the CA stores this information. If user 4 accepts during rounds 3, 5 and rejects during rounds 2, 4, then it stores the sequence $0,1,0,1$. 
At the time of payment, which is the end of all update-and-allocate rounds, the user sends this sequence reliably to the CA, for instance, using an open-loop channel code. The CA then charges the user based on the ask prices during the rounds in which spectrum was accepted. Here we assume that users are non-strategic, an assumption
that we use throughout the paper and relax only in 
Sections~\ref{sec:truthful_matched_auction},~\ref{sec:vickrey_auction} and~\ref{sec:truth_results}, where we deal with quantized auction schemes for strategic users. When users are strategic,
winners are not allowed to reject spectrum and therefore pay the ask
price during each round. So it is enough if the CA remembers the
winner information.
\section{Quantized auction example: Unmatched auction}
\label{sec:unmatched}
As a motivating example for quantized auctions and as a way to
compare our schemes with the schemes in the literature that assume 
noiseless user-to-CA transmission, we now propose a quantized 
version of single unit auctions with non-strategic users.
In this simple quantized auction, user $i$ sends one bit per 
update-and-allocate period without using the second feedback bit ($z_{it}$) 
it has received from the CA. The sequence of bits is obtained by converting 
the bid into its binary equivalent. For example, if user $i$'s bid is 0.76, 
then the binary equivalent is $0.1100001\cdots$. The sequence of bits 
obtained from the binary equivalent forms the sequence 
$\{x_{i1}, x_{i2},\cdots\}$. Therefore, the first three 
transmissions from user $i$ would be $1,1,0$. The CA's estimate of the 
bid prices at each update round is obtained by converting the binary 
sequence it has received from each user until that point, into the 
corresponding decimal fraction. 
For example, if the transmitted sequence $1,1,0$ is received with an 
erroneous third bit as $1,1,1$, then the CA's estimate of the bid after 
the third reception would be 
$1 \times 2^{-1}+1\times 2^{-2}+1\times 2^{-3} = 0.875$. 
In any round $t$, the CA awards spectrum to the user with the highest 
estimate and sets this estimate as the ask price for that round. It then 
sends the two feedback bits $u_{it}$ and $z_{it}$ to inform the users about 
the result of the $t^{\text{th}}$ update-and-allocate round and about 
the CA's estimate of its bid price. Since the users know $y_{it}$, the 
winner also knows the ask price set by the CA. Based on this information, 
the winner chooses either to accept or reject spectrum for the $t^{\text{th}}$
spectrum use period. The winners can be given this choice since we
assume that the users are non-strategic.

When there is noise in the user-to-CA channel, this scheme would, on the average, result in sub-optimal allocations even after many update rounds. 
In other words, on the average, the CA is not guaranteed to allocate spectrum 
to the highest bidder, as illustrated in Section~\ref{sec:simconv}. 
In order to overcome this, one approach would be to view this as an 
open-loop channel coding problem and use sophisticated channel coding 
schemes to quantize the bids so that the CA's estimate is arbitrarily close to the actual bid. Although there exist techniques by which users can encode 
their bids to be received by the CA with arbitrarily low error probabilities, such techniques would require large block length open-loop codes that can cause severe overhead, resulting in the CA taking many rounds to converge to 
the optimal allocation. Such {\em{open-loop}} schemes also do not scale well with multi-unit auctions and with time-varying bids. 
\section{Matched auction scheme}
\label{sec:matched}
For single-unit auctions with non-strategic users and constant bid
prices, an alternative approach to unmatched auctioning is to 
exploit the noiseless feedback from the CA to the users.
In this section, we devise such a scheme, where
as the number of rounds increases, the probability that the CA allocates spectrum to the highest bidder approaches one and 
the CA's revenue converges to a price that is close to the 
maximum bid price.
\\
{\bf Posterior matching and channel output feedback problems:}
Since this scheme is closely related to the iterative scheme in 
Horstein's paper~\cite{Ho63}, we discuss it briefly.
Horstein's scheme is a specific case of the more general framework of posterior matching.
It achieves the point-to-point capacity for a BSC with noiseless feedback.
In this scheme, the transmitter represents the sequence it wants to transmit using
a {\em message point}. The receiver knows the prior distribution of the message point,
which is continuous and uniform over the interval $[0,1]$. 
The transmitter knows that the receiver's prior model for the distribution of the message point
is uniform over $[0,1]$.
Both the transmitter and the receiver maintain
and update the posterior distribution of the message point conditioned on all
the bits observed at the receiver.
In each round, the transmitter tells the receiver whether or not the message point
is below the posterior median. The receiver uses this bit to update its posterior
distribution and sends the same bit back to the transmitter. The feedback bit is
received error-free because of the noiseless feedback. Therefore, the transmitter
can perform the same posterior update as the receiver. In the next round, the transmitter
sends one bit according to the same rule as in the previous round. As the number of rounds
increases, the receiver becomes more and more confident about its estimate of the message
point, and the posterior
cumulative distribution of the message point converges to a unit step at the actual message point.

In our scheme, the $N$ bid prices act as $N$ message points and are represented as 
random variables $\{B_i\}_{i=1}^N$. The users act as transmitters, while the CA acts 
as the receiver and maintains posterior distributions for each of these bid prices. 
At the beginning of round $t$, the users inform the CA whether their bids are at least as
large as the posterior median\footnote{The median $m$ of a continuous random variable $X$ is defined by
${\bf P}\{X \leq m\} = 1/2$. The median $m$ of a discrete random variable $X$ is defined
by ${\bf P}\{X \leq m\} \geq 1/2$ and ${\bf P}\{X \geq m\} \geq 1/2$. To make the median
unique in the discrete case, we pick the smallest $m$ such that it also has non-zero probability mass.}
$M_{it}$
at the beginning of that round:
\begin{equation}
x_{it} = \
\begin{cases}
1 & \text{ if $b_i \geq M_{it}$}
\\
0 & \text{ otherwise.}
\end{cases}
\label{eqn:xit}
\end{equation}
Each of these bits passes through a BSC$_p$ and is received by the CA as $y_{it}$.
Depending on the information it receives from the users, the CA updates its distribution of
$\{B_{i}\}_{i=1}^N$ and its estimate of each bid, which is set to be the updated posterior
median $M_{i,t+1}$. The winner for the $t^{\text{th}}$ round is picked using
${\arg \max}_i\{M_{i,t+1}\}$, and the ask price is set to be
$a_t = {\max}_i  \{M_{i,t+1}\} - h$, 
where $h$ is a small positive number. The number $h$ specifies how much less than the highest bid the CA will asymptotically get in revenue. The larger it is, the better it insures that the winner does not reject spectrum as 
$t$ increases, resulting in a faster convergence of the CA's revenue.
The role of $h$ would become clear in the proof of Proposition~\ref{lemma:lemma2}.
The CA then sends $u_{it}$, identical to the definition in (\ref{eqn:uit}) and $z_{it}=y_{it}$
so that all the users can perform the same update as the CA and compute the updated posterior
distribution. The winner then decides whether or not to use spectrum during the corresponding 
spectrum use period, based on its bid price and the new ask price. 

After the following spectrum use period, the users reply back to the CA in 
the same fashion as in the previous round using (\ref{eqn:xit}) and 
the process continues as long as the CA is willing to sell spectrum. 
These steps are illustrated in Fig.~\ref{fig:auction2}.
\begin{figure}[htb]
\begin{minipage}[b]{1.0\linewidth}
  \centering
  \centerline{\epsfig{figure=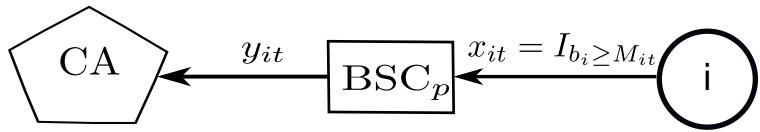,width=10cm}}
\end{minipage}
\caption{User$_i$-to-CA step at the beginning of round $t$.}
\label{fig:auction1}
\end{figure}
\begin{figure}[htb]
\begin{minipage}[b]{1.0\linewidth}
  \centering
  \centerline{\epsfig{figure=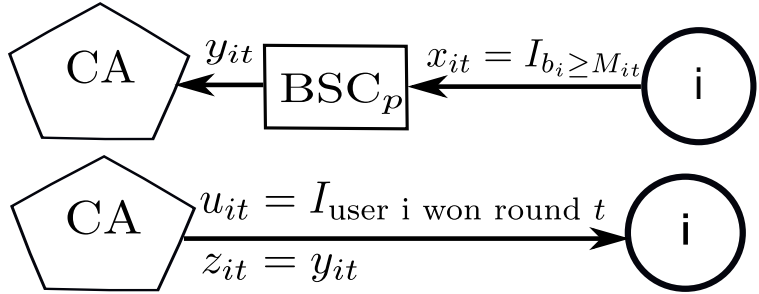,width=10cm}}
\vspace*{-0.25cm}
\end{minipage}
\caption{This figure shows the User$_i$-to-CA step at the beginning of 
the $t^{\text{th}}$ update round and CA-to-user$_i$ step at the end of 
the $t^{\text{th}}$ update round.}
\label{fig:auction2}
\end{figure}
In all our auction schemes, the payment to the CA is settled at the end of
all update-and-allocate rounds as outlined in Section~\ref{sec:scheme_outline}.
In the next two sub-sections, we provide a detailed description of
our algorithm and the posterior update step.
\begin{figure}
\fbox{
 \addtolength{\linewidth}{-2\fboxsep}%
 \addtolength{\linewidth}{-2\fboxrule}%
\begin{minipage}{\linewidth}
\begin{spacing}{1}
\begin{algorithmic}
\State $t \gets 1$ \% $t$ denotes the current update-and-allocate round
\While{(CA has a unit of spectrum to auction)}
 \For{$(i=1;i\leq N;i++)$}
   \State user $i$ executes \textsc{OneRoundUser}$(i,t)$ 
 \EndFor
 \State CA executes \textsc{OneRoundCA}$(t)$ 
 \State $t \gets t+1$   
\EndWhile
\end{algorithmic}
\end{spacing}
\end{minipage}
} 
\caption{Overall flow of the auction procedure.}
\label{fig:overallAlgo}
\end{figure}

\subsection{Algorithm description and pseudocode}
\label{sec:matched_algo_pcode}
The pseudocode for the matched auction scheme is in Fig.~\ref{fig:overallAlgo}, which
contains calls to functions \textsc{OneRoundUser} and \textsc{OneRoundCA},
both of which are shown in Fig.~\ref{fig:oneround}.
Function \textsc{OneRoundUser} is executed
by each user $i$ in every update-and-allocate round $t$, to mimic the CA's posterior median calculations
and determine $x_{it}$ according to (\ref{eqn:xit}).
Moreover, if spectrum has been awarded to the user,
this function decides whether or not to use the spectrum based 
on whether or not its bid is above $m_{it} - h$. 
Note that the posterior distribution of user $i$'s bid and the 
particular realization
of its median calculated by the CA,
are denoted by $F_{it}$ and $m_{it}$, respectively; whereas their replicas calculated and maintained
by user $i$ are denoted, respectively, $F^u_{it}$ and $m^u_{it}$.

Function \textsc{OneRoundCA} is executed by the CA in every update-and-allocate round $t$, in order to
update the posterior distribution of each user's bid, determine the auction winner, and calculate the two bits to be sent back to each user.
Before getting any information from the users, the CA's initial estimate
of the $i^{\text{th}}$ user's bid is the prior median of $B_{i}$, i.e., $1/2$.
Both functions \textsc{OneRoundUser} and \textsc{OneRoundCA} use the function \textsc{updateDistribution}
shown in Fig.~\ref{fig:updateDist}.  This function computes the update of the posterior
distribution of user $i$'s bid, based on the latest bit received by the CA from user $i$.
The update equations used by this function are given in the next subsection.
\begin{figure}[!htb]
\begin{minipage}{1\linewidth}
\begin{spacing}{1}
\begin{framed}
\begin{algorithmic}
{
\Function{\textsc{OneRoundUser}}{$i,t$}
 \If{$t<=2$}
   \State \% Initialize the distribution to uniform
   \State $F^u_{i,t-2} \gets \mbox{uniform}[0,1]$
 \Else
   \State retrieve previously stored $F^u_{i,t-2}$
 \EndIf
 
 \State $m^u_{it} \leftarrow \text{Median}(F^u_{i,t-2})$
 \If{$t>1$}
  \State receive bits $u_{i,t-1}$ and $z_{i,t-1}$ from the CA
  \State {\footnotesize{$F^u_{i,t-1} \leftarrow \textsc{{\scriptsize{updateDistribution}}}(F^u_{i,t-2},z_{i,t-1})$}}
  \State store $F^u_{i,t-1}$
  \State $m^u_{it} \leftarrow \text{Median}(F^u_{i,t-1})$
  \If{$(u_{it}=1$ \& $m^u_{it}-h \leq b_i)$}
   \State use spectrum
   \State store $t$
   \State store the amount $m^u_{it}-h$ to be paid
  \EndIf
 \EndIf
 \State $x_{it} \leftarrow I_{b_i \geq m^u_{it}}$
 \State send $x_{it}$ to the CA
\EndFunction
}
\end{algorithmic}
\end{framed}
\end{spacing}
\end{minipage}
\caption{Algorithm at the user $i$ during the $t^{\text{th}}$ round.}
\label{fig:user2ca}
\end{figure}
\begin{figure}
\begin{minipage}{1\linewidth}
\begin{spacing}{1}
\begin{framed}
\begin{algorithmic}
{
\Function{\textsc{OneRoundCA}}{$t$}

\For {$(i=1$; $i\leq N$; $i++)$} 
  \If{$t==1$}
   \State \% Initialize the distribution to uniform
   \State $F_{i,t-1} \gets \mbox{uniform}[0,1]$
  \Else
   \State retrieve previously stored $F_{i,t-1}$
  \EndIf
  \State receive $y_{it}$ from user $i$
  \State $F_{it} \gets \textsc{\footnotesize{updateDistribution}}(F_{i,t-1},y_{it})$
  \State store $F_{it}$
  \State $m_{i,t+1} \gets \text{Median}(F_{it})$
\EndFor
\State Winner $\gets {\arg \max}_i  \{m_{i,t+1}\}$
\State store Winner
\For {$(i=1$; $i\leq N$; $i++)$}
  \State $u_{it} \gets I_{i = \text{Winner}}$
  \State $z_{it} \gets y_{it}$
  \State send $u_{it}$ and $z_{it}$ to user $i$
\EndFor
\EndFunction
}
\end{algorithmic}
\end{framed}
\end{spacing}
\end{minipage}
\caption{Pseudo-code of the algorithm at both the user and CA during the $t^{\text{th}}$ round.}
\label{fig:oneround}
\end{figure}

\subsection{Posterior update step}
\label{sec:matched_posteriorUpdates}
All the bits sent by user $i$ until the beginning of round $t$ are received by the CA 
as $(y_{i1},\cdots,y_{it})={\bf Y}_{i,t}$. We denote the posterior distribution
of $B_i$, conditioned on ${\bf Y}_{i,t}$, using
\begin{equation}
F_{it}(b)=\mathbf{P}\{B_i \leq b | {\bf Y}_{i,t} \}.
\label{eqn:pdist}
\end{equation}
As stated earlier, we denote the posterior median of $B_i$ at the beginning of round
$t$ using $M_{it}$ defined as
\[
M_{it} = \text{Median}(B_i | {\bf Y}_{i,t-1}).
\]
$M_{it}$ is a random variable since it is a function of the random vector
${\bf Y}_{i,t-1}$.  We denote by $m_{it}$ the particular
realization of $M_{it}$ computed by the CA from a specific observation
of ${\bf Y}_{i,t-1}$.  We assume that each user-to-CA channel is a BSC$_p$,
that the channel noise is temporally independent, and that the channel
noise is independent of the bid prices.

Our model and convergence analysis in Section~\ref{sec:matched_convergence}
are based on assuming that prices are continuous random variables.
However, our simulations in Section~\ref{sec:simulations} reflect a more practical
scenario where prices are discrete.  We therefore derive posterior
update equations that are applicable to both the continuous and
the discrete cases.  Specifically, we assume that the discretization
interval for the bid prices is $\Delta$.  In the continuous case, $\Delta=0$.
In the discrete case, $\Delta > 0$, and the bids are integer
multiples of $\Delta$, with probability 1.  We define
\[
m'_{it} = m_{it} - \Delta.
\]
Note that, in the continuous case, $m'_{it} = m_{it}$.  In the discrete
case, $m'_{it}$ is equal to the highest possible bid price which is
below $m_{it}$.

\newtheorem{proposition}{Proposition}
\begin{proposition}[]
\label{prop:prop1}

The equations needed to calculate the posterior distribution $F_{it}$ from
$F_{i,t-1}$ and the bit $y_{it}$ received from user $i$ in round $t$,
are as follows.
\\
Case 1: $b < m_{it}$ and $y_{it}=1$
\\
\begin{equation}
F_{it}(b)= \dfrac{p F_{i,t-1}(b)}{1 - p - (1-2p) F_{i,t-1}(m'_{it})}.
\label{eqn:update11}
\end{equation}
\\
Case 2: $b < m_{it}$ and $y_{it}=0$
\\
\begin{equation}
F_{it}(b)= \dfrac{(1-p)F_{i,t-1}(b)}
                 { p + (1-2p) F_{i,t-1}(m'_{it})}.
\label{eqn:update01}
\end{equation}
\\
Case 3: $b \geq m_{it}$ and $y_{it}=1$ 
\begin{equation}
F_{it}(b)= \dfrac{(1-p) F_{i,t-1}(b) + (2p-1) F_{i,t-1}(m'_{it})}
                     {1-p + (2p-1) F_{i,t-1}(m'_{it})}.
\label{eqn:update12}
\end{equation}
\\
Case 4: $b \geq m_{it}$ and $y_{it}=0$
\begin{equation}
F_{it}(b)= \dfrac{p F_{i,t-1}(b) + (1-2p) F_{i,t-1}(m'_{it})}
                {p + (1-2p) F_{i,t-1}(m'_{it})}.
\label{eqn:update02}
\end{equation}
\end{proposition}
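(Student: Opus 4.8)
The plan is to derive all four equations as instances of a single Bayes'-rule update, treating the continuous and discrete cases uniformly through the quantity $F_{i,t-1}(m'_{it})$. The key observation is that, conditioned on $\mathbf{Y}_{i,t-1}$, the transmitted bit $x_{it}$ is a deterministic function of $B_i$: by (\ref{eqn:xit}), $x_{it}=1$ exactly when $B_i \geq m_{it}$ and $x_{it}=0$ otherwise. Consequently $\mathbf{P}\{B_i \geq m_{it} \mid \mathbf{Y}_{i,t-1}\} = 1 - F_{i,t-1}(m'_{it})$ and $\mathbf{P}\{B_i < m_{it} \mid \mathbf{Y}_{i,t-1}\} = F_{i,t-1}(m'_{it})$, since $m'_{it} = m_{it}-\Delta$ is by construction the largest admissible bid strictly below $m_{it}$ (and in the continuous case $m'_{it}=m_{it}$ with $F_{i,t-1}(m_{it})=1/2$).

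First I would compute the two marginal likelihoods that will serve as the denominators. Using the law of total probability over $x_{it}\in\{0,1\}$ together with the BSC$_p$ transition probabilities $\mathbf{P}\{y_{it}=1\mid x_{it}=1\} = \mathbf{P}\{y_{it}=0\mid x_{it}=0\} = 1-p$ and the complementary cross-over probabilities $p$, I obtain
\[
\mathbf{P}\{y_{it}=1 \mid \mathbf{Y}_{i,t-1}\} = (1-p)\bigl(1 - F_{i,t-1}(m'_{it})\bigr) + p\,F_{i,t-1}(m'_{it}) = 1 - p - (1-2p)F_{i,t-1}(m'_{it}),
\]
and symmetrically $\mathbf{P}\{y_{it}=0 \mid \mathbf{Y}_{i,t-1}\} = p + (1-2p)F_{i,t-1}(m'_{it})$. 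These are precisely the denominators appearing in the four cases.

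Next I would compute the joint probability $\mathbf{P}\{B_i \leq b,\, y_{it}=y \mid \mathbf{Y}_{i,t-1}\}$ in each case and divide by the appropriate marginal to get $F_{it}(b)$ by Bayes' rule. When $b < m_{it}$ (Cases 1 and 2), the event $\{B_i \leq b\}$ forces $x_{it}=0$, so the joint factors as $\mathbf{P}\{y_{it}=y\mid x_{it}=0\}\,F_{i,t-1}(b)$, yielding the numerators $p\,F_{i,t-1}(b)$ and $(1-p)F_{i,t-1}(b)$. When $b \geq m_{it}$ (Cases 3 and 4), I would split $\{B_i \leq b\}$ into $\{B_i < m_{it}\}$ (where $x_{it}=0$, mass $F_{i,t-1}(m'_{it})$) and $\{m_{it} \leq B_i \leq b\}$ (where $x_{it}=1$, mass $F_{i,t-1}(b) - F_{i,t-1}(m'_{it})$), apply the corresponding BSC probabilities to each piece, and combine; for $y_{it}=1$ this gives the numerator $(1-p)F_{i,t-1}(b) + (2p-1)F_{i,t-1}(m'_{it})$ and for $y_{it}=0$ it gives $p\,F_{i,t-1}(b) + (1-2p)F_{i,t-1}(m'_{it})$, matching (\ref{eqn:update12}) and (\ref{eqn:update02}).

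The only delicate point, and the step I would be most careful with, is the bookkeeping at the boundary $b=m_{it}$ and the uniform handling of the continuous and discrete cases. Because the convention in (\ref{eqn:xit}) assigns $x_{it}=1$ to the value $B_i=m_{it}$ itself, the atom at the median (present in the discrete case) must be grouped with $\{B_i \geq m_{it}\}$ rather than $\{B_i < m_{it}\}$. Introducing $m'_{it}=m_{it}-\Delta$ is exactly what makes $F_{i,t-1}(m'_{it})$ equal to $\mathbf{P}\{B_i < m_{it}\mid \mathbf{Y}_{i,t-1}\}$ in \emph{both} settings, so that no separate argument is needed for $\Delta=0$ versus $\Delta>0$; I would finish by checking that the continuous case reduces correctly using $F_{i,t-1}(m'_{it})=1/2$.
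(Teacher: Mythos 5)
Your proposal is correct and follows essentially the same route as the paper's Appendix derivation: a Bayes'-rule update in which the likelihood of $y_{it}$ is reduced to BSC$_p$ transition probabilities via the fact that $x_{it}=I_{B_i\geq m_{it}}$, with the event $\{B_i\leq b\}$ split at $m_{it}$ when $b\geq m_{it}$, and with $m'_{it}=m_{it}-\Delta$ handling the discrete and continuous cases uniformly. The only cosmetic difference is that you compute the denominator as the marginal $\mathbf{P}\{y_{it}\mid \mathbf{Y}_{i,t-1}\}$ by conditioning on $x_{it}$ directly, whereas the paper expands it as $\mathbf{P}\{y_t\mid B\leq b\}F_{t-1}(b)+\mathbf{P}\{y_t\mid B>b\}(1-F_{t-1}(b))$ and simplifies; the two computations are identical.
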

{\bf Proof} is in Appendix~\ref{sec:derivation}.

The posterior update procedure is implemented in the function
${\textsc {updateDistribution}}(F_{i,t-1},y_{it})$ shown in
Fig.~\ref{fig:updateDist}. Its inputs are the posterior distribution $F_{i,t-1}$ after
round $t-1$ and the bit $y_{it}$ sent by user $i$ to the CA at the beginning of round $t$.
Its output is the updated posterior $F_{it}$.
\begin{figure}
\fbox{
 \addtolength{\linewidth}{-2\fboxsep}%
 \addtolength{\linewidth}{-2\fboxrule}%
\begin{minipage}{\linewidth}
\begin{spacing}{1}
\begin{algorithmic}
{
\Function{$F_{it}\gets$\textsc{updateDistribution}}{$F_{i,t-1},y_{it}$}
\If {$(y_{it}==1)$}
 \State Compute $F_{it}$ using (\ref{eqn:update11}) and (\ref{eqn:update12}).
\Else
 \State Compute $F_{it}$ using (\ref{eqn:update01}) and (\ref{eqn:update02}).
\EndIf   
\EndFunction
}
\end{algorithmic}
\end{spacing}
\end{minipage}
}
\caption{Function to update the posterior distribution.}
\label{fig:updateDist}
\end{figure}
\subsection{Convergence and asymptotic optimality}
\label{sec:matched_convergence}
\newtheorem{lemma}{Lemma}
\begin{lemma}[]
Under matched auctioning, the posterior median of each bid price converges to the respective bid price in probability.
\label{lemma:lemma1}
\end{lemma}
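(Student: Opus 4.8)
The plan is to fix a single user and drop the index $i$. Since the bids $\{B_i\}$ are modeled as independent, the user-to-CA channels are non-interfering BSC$_p$'s, and each user's update in \textsc{OneRoundCA} depends only on its own received bits, the $N$ posterior medians evolve independently, so it suffices to treat one user with true bid $b\in[0,1]$ (a.s. interior under the uniform prior) and show $M_t\to b$ in probability. Because the user always reports whether $b$ lies above or below the current posterior median, this is precisely Horstein's posterior-matching scheme~\cite{Ho63,ShFe07} for a BSC$_p$ with noiseless feedback, and the claim amounts to saying the ``message point'' $b$ is asymptotically resolved.

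\textbf{Martingale backbone.} For each fixed $b$, $F_t(b)=\mathbf{P}\{B\le b\mid \mathbf{Y}_{t}\}$ is a bounded martingale with respect to the filtration generated by the received bits, by the tower property. By the martingale convergence theorem it converges a.s.\ to a limit $F_\infty(b)$; taking a countable dense set of $b$ and using monotonicity yields a limiting (random) CDF $F_\infty$. Convergence of the median to $b$ then reduces to showing $F_\infty=\delta_b$, the unit step at $b$, which I would obtain by proving the posterior concentrates: for every $\delta>0$, $\mathbf{P}\{B\in(b-\delta,b+\delta)\mid\mathbf{Y}_t\}\to1$.

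\textbf{Drift computation (the engine).} Using the explicit updates of Proposition~\ref{prop:prop1}, I would track the posterior density at the true point. In the continuous case the median splits the posterior in half, so $F_{t-1}(m_t)=1/2$, and substituting this into the four cases shows that the density at $b$ is multiplied each round by $2(1-p)$ when the received bit equals the transmitted bit and by $2p$ when the channel flips it, regardless of which side of the median $b$ lies on. Since the flip event is independent of the bid and of the past, $\log f_t(b)=\sum_{s\le t}\Lambda_s$ is a random walk with i.i.d.\ increments $\Lambda_s=\log\bigl(2(1-p)\bigr)$ w.p.\ $1-p$ and $\log\bigl(2p\bigr)$ w.p.\ $p$, with mean $\mathbf{E}[\Lambda]=\log 2-H(p)>0$ for $p<1/2$, where $H(p)=-(1-p)\log(1-p)-p\log p$. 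The same substitution applied to two fixed points shows the log-likelihood ratio $\log\!\bigl(f_t(b)/f_t(b')\bigr)$ moves by $\pm\log\frac{1-p}{p}$, with positive drift $(1-2p)\log\frac{1-p}{p}$, exactly on the rounds when the median separates $b$ from $b'$. This is the BSC capacity reappearing as the rate of posterior concentration, and it is what forces mass toward $b$ rather than toward any competing point.

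\textbf{Main obstacle.} The difficulty is that the median is adaptive, so the favorable likelihood-ratio drift between $b$ and a competitor $b'$ acts only on rounds when the median lies between them; I must show such rounds are frequent enough to send the posterior mass outside $(b-\delta,b+\delta)$ to zero, rather than merely inflating the density at the single point $b$. The plan is to write the tail mass $\mathbf{P}\{B\ge b+\delta\mid\mathbf{Y}_t\}$ as an integral of $f_t(b')$ over $b'\ge b+\delta$, compare each such $f_t(b')$ with $f_t(b)$ through the ratio whose logarithm has positive drift accumulating on the separating rounds, and then argue that those rounds cannot be too rare: a median stuck far from $b$ would keep separating $b$ from the bulk, so the likelihood ratio would keep growing and push mass toward $b$, contradicting the median being stuck. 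A symmetric argument bounds the lower tail. Because only convergence in probability is claimed, a first-moment/Markov bound on these tail masses suffices and no concentration rate is needed. The remaining technicalities are the discrete case, where the median is only an approximate $1/2$-quantile so the per-round multipliers are $2(1-p)+O(\Delta)$ and $2p+O(\Delta)$ --- still giving positive drift for $\Delta$ small, via the shift $m'_{it}=m_{it}-\Delta$ in Proposition~\ref{prop:prop1} --- and the boundary cases $b\in\{0,1\}$, which have probability zero under the uniform prior.
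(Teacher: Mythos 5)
Your route is genuinely different from the paper's: the paper does not prove posterior concentration at all, but simply invokes Theorem~1 of~\cite{ShFe08} (which asserts that under posterior matching $F_{it}(B_i+\delta)-F_{it}(B_i-\delta)\to 1$ in probability) and then spends its effort on the short deduction from that concentration statement to convergence of the median. You instead try to establish the concentration from first principles. The pieces you do carry out are correct and worth having: $F_t(b)$ is indeed a Doob martingale; substituting $F_{t-1}(m_t')=1/2$ into the four cases of Proposition~\ref{prop:prop1} really does give per-round density multipliers $2(1-p)$ and $2p$ with mean log-increment $\log 2-H(p)$, and the log-likelihood ratio between two points moves by $\pm\log\frac{1-p}{p}$ exactly on the rounds when the median separates them, with positive drift $(1-2p)\log\frac{1-p}{p}$.

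The genuine gap is the step you yourself label the ``main obstacle.'' Everything hinges on showing that the median-separating rounds between the true bid $b$ and the mass outside $(b-\delta,b+\delta)$ occur often enough, and your argument for this is a heuristic (``a median stuck far from $b$ would keep separating $b$ from the bulk, so the likelihood ratio would keep growing and push mass toward $b$, contradicting the median being stuck'') rather than a proof. As stated it does not close: the median need not be ``stuck'' --- it can wander, the ``bulk'' it separates $b$ from changes every round, growth of a pointwise density ratio does not by itself move probability mass (the set over which you integrate $f_t(b')$ is shrinking in a way you have not controlled), and a positive-drift random walk argument gives almost-sure growth only along the separating rounds, whose asymptotic density is exactly what is in question. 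This circularity is the hard core of the posterior-matching convergence theorem, and it is precisely what the cited result of Shayevitz and Feder resolves (via an ergodic/fixed-point analysis of the induced Markov chain, not via the likelihood-ratio bookkeeping you propose). So your writeup is a correct reduction of Lemma~\ref{lemma:lemma1} to the concentration statement plus a correct computation of the drift, but the concentration statement itself remains unproved; to complete the argument you would either have to flesh out the frequency-of-separation argument into a real proof (nontrivial) or do what the paper does and quote~\cite{ShFe08}, in which case your martingale and drift machinery becomes unnecessary.
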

\begin{proof}
This result is based on the proof of Theorem 1 on page 3 of~\cite{ShFe08},
where it is shown that under posterior matching, the posterior distribution of the message
point computed by the receiver (in our case, the CA) converges in probability to the unit step at
the actual message point sent by the transmitter (in our case, the $i^{\text{th}}$ user).
Using the notation from (\ref{eqn:pdist}) for the CA's posterior distribution of
the $i^{\text{th}}$ user's bid, we have the following for any $\omega>0$ and $\delta>0$:
$${\bf P}\{|F_{it}(B_i+\delta)-F_{it}(B_i-\delta)-1| < \omega\} \rightarrow 1,$$
where the probability is evaluated using the joint distribution of CA's prior model for $B_i$
and the channel outputs.  Using $\omega=1/2$, we have that for any $\delta > 0$ there
exists $t_0>0$ such that for all $t>t_0$,
\[
{\bf P}\{ | F_{it}(B_i + \delta) - F_{it}(B_i - \delta)  - 1 | < 1/2 \} > 0.
\]
Equivalently,
\begin{equation*}
{\bf P}\{ 1/2 < F_{it}(B_i + \delta) - F_{it}(B_i - \delta)  < 3/2 \} > 0.
\end{equation*}
This implies that for any $\delta > 0$ there exists $t_0>0$ such that for all $t>t_0$,
\begin{equation}
{\bf P}\{ F_{it}(B_i + \delta) > 1/2 + F_{it}(B_i - \delta)  \} > 0.
\label{eq:lemma1}
\end{equation}
Since $F_{it}$ is a cumulative probability distribution, both $F_{it}(B_i + \delta)$ and
$F_{it}(B_i - \delta)$ must be between 0 and 1 with probability 1.  Therefore, it follows
from (\ref{eq:lemma1}) that, for all $t>t_0$, $F_{it}(B_i - \delta) < 1/2$
and $F_{it}(B_i + \delta) > 1/2$ with probability 1.  To see this, suppose that there is
a non-zero probability that $F_{it}(B_i - \delta) \geq 1/2$ for some value $t > t_0$.
Then (\ref{eq:lemma1}) would imply a non-zero probability for $F_{it}(B_i + \delta) > 1$,
which is a contradiction.  A similar argument shows that $F_{it}(B_i + \delta) > 1/2$ with
probability 1.

But the fact that $M_{i,t+1}$ is
the median of $F_{it}$ means that $F_{it}(M_{i,t+1}) = 1/2$.  So, we get:
\[
F_{it}(B_i - \delta) < F_{it}(M_{i,t+1}) < F_{it}(B_i + \delta),
\]
for all $t>t_0$, with probability 1. This implies
\[
B_i - \delta < M_{i,t+1} < B_i + \delta,
\]
with probability 1 and
\[
{\bf P}( | M_{i,t+1} - B_i | < \delta) = 1,
\]
for all $t>t_0$.  Since such $t_0$ exists for any $\delta$, this shows that $M_{it}$ converges
to $B_i$ in probability as $t\rightarrow\infty$, completing our proof of Lemma~\ref{lemma:lemma1}.
\end{proof}

\begin{proposition}[]
For any $h>0$, the probability of allocating spectrum to the highest bidder converges to one and the CA's revenue converges to $B_{(N)}-h$ in probability, where $B_{(N)}$ is the largest
of the $N$ bids.
\label{lemma:lemma2}
\end{proposition}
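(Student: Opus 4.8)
The plan is to derive both claims from Lemma~\ref{lemma:lemma1}, which already establishes $M_{i,t+1}\to B_i$ in probability for each fixed $i$. Since there are only $N$ users, a union bound upgrades this to the joint statement that, for every fixed $\delta>0$,
\[
\mathbf{P}\Big\{\max_{1\le i\le N}|M_{i,t+1}-B_i|<\delta\Big\}\longrightarrow 1
\quad\text{as } t\to\infty.
\]
I would use this single fact as the workhorse for both the allocation and the revenue statements.

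For the allocation claim, let $i^\ast$ denote the index attaining $B_{(N)}$ and let $\gamma=B_{(N)}-B_{(N-1)}$ be the gap to the runner-up. Because the bids are continuous random variables, the maximizer is unique and $\gamma>0$ with probability one. Whenever all medians lie within $\delta<\gamma/2$ of their respective bids, the median of user $i^\ast$ exceeds $B_{(N)}-\delta$ while every other median is below $B_{(N-1)}+\delta<B_{(N)}-\delta$, so $\arg\max_i M_{i,t+1}=i^\ast$ and the highest bidder wins. The only complication is that $\gamma$ is random and can be arbitrarily small, so no single $\delta$ works uniformly; I would circumvent this by splitting on $\{\gamma\le\eta\}$. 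For any $\eta>0$,
\[
\mathbf{P}\{\text{wrong winner at round }t\}
\le
\mathbf{P}\Big\{\max_i|M_{i,t+1}-B_i|\ge\tfrac{\eta}{2}\Big\}
+\mathbf{P}\{\gamma\le\eta\},
\]
where the first term vanishes as $t\to\infty$ by the joint convergence and the second vanishes as $\eta\to0$ since $\gamma>0$ almost surely. Sending $t\to\infty$ and then $\eta\to0$ yields $\mathbf{P}\{\text{highest bidder wins}\}\to1$.

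For the revenue claim, recall that the winner accepts precisely when its bid is at least the ask price $a_t=\max_i M_{i,t+1}-h$, and that the realized revenue equals $a_t$ upon acceptance and $0$ otherwise. Fix $\epsilon>0$ and consider the event on which $i^\ast$ wins and $|M_{i^\ast,t+1}-B_{(N)}|<\min(\epsilon,h)$. On this event $a_t=M_{i^\ast,t+1}-h<B_{(N)}$, so the winner accepts, and the revenue $M_{i^\ast,t+1}-h$ lies within $\epsilon$ of $B_{(N)}-h$. This is exactly where the hypothesis $h>0$ is used: it supplies the slack that keeps the ask price below the true top bid even when the posterior median slightly overshoots $B_{(N)}$. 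The complement of this good event is contained in $\{\text{wrong winner}\}\cup\bigcup_i\{|M_{i,t+1}-B_i|\ge\min(\epsilon,h)\}$, whose probability tends to $0$ by the allocation argument together with the joint convergence; hence the revenue converges to $B_{(N)}-h$ in probability.

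I expect the main obstacle to be the randomness of the gap $\gamma$ between the two largest bids: convergence in probability of the individual medians does not by itself force the correct $\arg\max$, because the accuracy needed to separate the top two medians depends on $\gamma$. The two-stage device above --- first demanding accuracy only on $\{\gamma>\eta\}$ and then exhausting the small-gap region via $\mathbf{P}\{\gamma>0\}=1$ --- is the crux that makes the estimate uniform over the realizations of the bids, and it is the step I would be most careful to write out rigorously.
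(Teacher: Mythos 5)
Your proof is correct, and it resolves the central difficulty --- that the accuracy needed to separate the top two posterior medians depends on the realized gap between the two largest bids --- by a different device than the paper. The paper conditions on a realization of the bid prices, invokes Lemma~\ref{lemma:lemma1} to produce a realization-dependent time $t_1$ after which the medians are correctly ordered with probability one (and a later $t_2$ after which the revenue is within $\delta$ of $b_{(N)}-h$), and then removes the conditioning by exchanging the limit in $t$ with the integral over the joint bid density via the dominated convergence theorem. You instead work unconditionally: you upgrade Lemma~\ref{lemma:lemma1} to joint convergence of all $N$ medians by a finite union bound, absorb the problematic realizations into the event $\{\gamma\le\eta\}$, and send its probability to zero after $t\to\infty$ using the fact that the gap is positive almost surely for continuous bids. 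This yields the same conclusion with no appeal to dominated convergence, and it is arguably more faithful to Lemma~\ref{lemma:lemma1} as stated (convergence in probability), whereas the paper's argument leans on the stronger ``probability one for all $t>t_0$'' form that its proof of that lemma actually asserts. The revenue step is essentially identical in both treatments: the slack $h>0$ keeps the ask price $\max_i M_{i,t+1}-h$ strictly below the top bid once the winning median is within $h$ of it, so the winner accepts, the revenue equals the ask price, and the ask price is within $\epsilon$ of $B_{(N)}-h$ on the good event.
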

\begin{proof}
We showed in Lemma~\ref{lemma:lemma1} that the posterior median $M_{ij}$ will eventually be
within $\delta$ of the bid price $B_i$ with probability 1, for any $\delta$.
Suppose that, for two users $i$ and $j$, we have the following realizations of
the bid prices: $B_i = b_i$ and $B_j = b_j$, and $b_i > b_j$.  Then, applying
Lemma~\ref{lemma:lemma1} with $\delta = (b_i - b_j)/2$, we see that there exists $t_{ij}$
such that for any $t>t_{ij}$, the posterior median $M_{it}$ is larger than
the posterior median $M_{jt}$ with probability 1.  Now for a particular realization
of the bid prices, let $k$ be the index of the maximum bid price, and let
\[
t_1 = {\max}_{j:j\neq k}\{ t_{kj}\}.
\]
(Without loss of generality, we are assuming here that only one bidder has
the maximum bid.) Then for any $t > t_1$, the posterior median $M_{kt}$
corresponding to the maximum bid will be larger than any other posterior
median, with probability 1.  Therefore, for any $t > t_1$, spectrum will be
awarded to the highest bidder with probability 1, and the CA's revenue will
only depend on the posterior median of the highest bidder.\footnote{It must be noted
here that $t_1$ will depend on the particular realization of the bid prices:
if two largest bids are very close, then it would take a large $t_1$ for their
respective posterior medians to get ordered correctly with
probability 1. On the other hand, if two largest bids are very close,
then awarding spectrum to the second-highest bidder would be nearly optimal
for the CA.}

We denote the highest bid and the corresponding median after the $t^{\text{th}}$ update
using $b_{(N)}$ and $M_{(N),t+1}$ respectively. Recall that we set the ask price to
${\max}_i  \{M_{i,t+1}\} - h$, where $h>0$. From Lemma~\ref{lemma:lemma1} and
the preceding paragraph, for any $\delta>0$, we have that there exists $t_2 > t_1$ such that for all $t > t_2$,
$${\bf P}\{b_{(N)}-h-\delta < M_{(N),t+1}-h < b_{(N)}-h+\delta\} = 1.$$
\\
{\em Case (i):} If $\delta \leq h$, for all $t > t_2$,
$${\bf P}\{ b_{(N)}-h-\delta < \text{Ask price at time $t$} < b_{(N)}-h+\delta \} = 1$$
Since the ask price is smaller than the maximum bid, this is equivalent to
$${\bf P}\{ b_{(N)}-h-\delta < \text{Revenue at time $t$} < b_{(N)}-h+\delta \} = 1.$$
For $\delta = h$, let the corresponding time $t_2$ be $t_h$.
\\
{\em Case (ii):} If $\delta > h$,
$${\bf P}\{ b_{(N)}-h-\delta < \text{Revenue at time $t$} < b_{(N)}-h+\delta\} \geq $$
$${\bf P}\{ b_{(N)}-2h < \text{Revenue at time $t$} < b_{(N)}\}.$$
But if $t>t_h$, then 
$${\bf P}\{ b_{(N)}-2h < \text{Revenue at time $t$} < b_{(N)}\} = 1.$$
Therefore, for this case, we can pick any $t_2 > t_h$.
Combining these two cases, we have that revenue at time $t$ converges to $b_{(N)}-h$
in probability, where all the probabilities are computed conditioned on a realization
of bids. Since this is true for any realization of the bid prices, we can remove
the conditioning on the bids to obtain that revenue at time $t$ converges to
$B_{(N)}-h$ in probability, where this probability is over the joint distribution
of bids and channel realizations. The last operation, where we exchange the integral
with respect to the joint density of the bids and the limit on $t$, is possible due
to the dominated convergence theorem~\cite{Ro88}. \footnote{The revenue converges to $b_{(N)}-h$
for all bid price realizations, probability is bounded and the joint density of
the bids is also bounded and well defined. Therefore, the conditions needed to apply
dominated convergence hold.} Although $h$ is an arbitrarily small
positive number, a smaller value of $h$ would result in a larger
number of rounds for the CA's revenue to converge to $B_{(N)}-h$.
In other words, the more the CA is willing to give up in revenue compared 
to the highest bid, the more quickly it's revenue would converge to $B_{(N)}-h$.
\end{proof}

\section{Single-unit auctions with strategic users: Truthful matched auction}
\label{sec:truthful_matched_auction}
In our exposition so far, we have assumed that users are not strategic. 
Strategic users act rationally and aim to maximize their payoff.
So using matched auctioning for strategic users may lead to inefficient
allocations, where the user who values spectrum the most is not allocated 
the resource even after many update-and-allocate rounds. In this section, 
we address the issues posed by strategic bidders
by extending the current matched auction set-up to a {\em truthful matched} auction. 
We know from~\cite{Kr02} that for a standard
auction where the winner pays the second highest bid, bidding truthfully is a
weakly dominant strategy for strategic users who want to maximize their payoff. Truthful matched auctioning tries to replicate a second price sealed bid auction under the current set-up.
To recollect, in the matched auction setup, the CA maintains posterior 
distributions of the bids of each user. In each round, the CA awards spectrum to the user with the highest posterior median and sets the ask price close to the highest posterior median, which the winner can compute.

In truthful matched auctioning, identifying the highest bidder works in the same way as in matched auctioning. Additionally, the CA and all the users maintain an ask price distribution ($A_t$), whose median ($a_t$) is taken to be the ask price for that round.
The CA has to send the bits it receives from the second highest bidder back to the users, 
so that they can compute the ask price from the posterior distribution of the second 
highest bid price. But at the outset, the CA does not know who the two highest bidders are.
To overcome this difficulty, the CA treats the second highest posterior median as the message point at each step.
It sends one additional bit to each user, denoted using $\tilde{y}_{it}$, which is equal 
to 1 if the second highest posterior median is larger than the median of the ask price distribution and 0 otherwise. To recall, this is the same
$\tilde{y}_{it}$ introduced in Section~\ref{sec:scheme_outline} and
illustrated in Fig.~\ref{fig:auctionOutline}.
\begin{equation}
\tilde{y}_{it} = I_{\text{Second highest posterior median $>$ Median of ask distribution}}
\label{eq:tildey}
\end{equation}
The modification is shown in Fig.~\ref{fig:truthful_flow}.
\begin{figure}[htb]
\begin{minipage}[b]{1.0\linewidth}
  \centering
  \centerline{\epsfig{figure=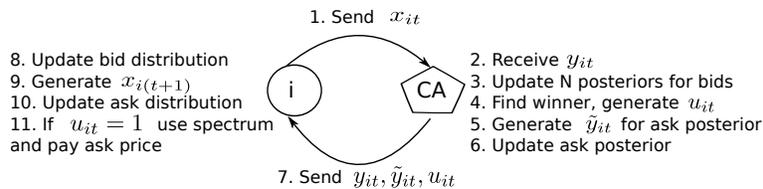,width=10cm}}
\end{minipage}
\caption{Actions taken by user $i$ and CA in round $t$.}
\label{fig:truthful_flow}
\end{figure} 

Posterior updates on the ask distribution are carried out by the CA and by all the users as
if this bit has been received from a {\em{virtual user}} through a BSC$_p$ in each round. 
So, the virtual user sends the bit $\tilde{y}_{it}$ according to the position of a message point--which is equal to the second highest posterior median at that round--and a posterior distribution--which is equal to the distribution of the ask price. The update equations are the same as given in Proposition~\ref{prop:prop1}, where we replace $y_{it}, m_{it},
m_{it}', F_{i,t-1}, F_{it}$ with $\tilde{y}_{it}, a_{it}, a_{it}', A_{i,t-1}, A_{i,t}$ respectively. As in the case of the bid distributions, the ask price distribution is assumed to have a uniform prior 
over $[0,1]$. In Section~\ref{sec:truth_results}, we show by simulation that in
truthful matched auctioning, the revenue of the CA tends to the second highest bid price in probability. 

We have previously mentioned in Section~\ref{sec:sysset} that 
strategic winners also pose the problem of using spectrum and 
claiming to have rejected it.
This could happen when the users are base stations that do not need 
the help of the CA in order to communicate during the spectrum use period. In this section, we avoid this problem by not allowing winners 
to reject spectrum. This is to assume that winners are always able to 
pay the ask price, even if it is larger than their 
private values. Although winners may initially have to pay a price 
more than their private values, we shall see using simulations that as the number of update-and-allocate rounds increases, the winner's payoff converges to the true theoretical payoff in probability. Since this replicates
a second price sealed bid auction, the payoff of the highest bidder
converges in probability to the absolute difference between the two highest bids. The payoff of the other bidders converges in probability to zero.

\section{Multi-unit auctions with strategic users: Quantized Vickrey auctions}
\label{sec:vickrey_auction}
In current wireless standards, there are hard constraints on the number of 
bits per CQI
(channel quality information). In general, if there 
are $K$ bits per CQI, then we can extend the preceding one-bit truthful
matched auction scheme into 
a $K$-bit scheme to simultaneously auction $K$ sub-channels. The usual assumption 
in multi-unit auctions is that the units are all identical, and they have 
diminishing marginal values for every user. A user who wants to bid for $K$ 
sub-carriers would, instead of having a single value, have a value profile given by 
${\mathbf {V}_i} = (v_{i1}, \cdots, v_{iK})$ such that $v_{ik} \geq v_{i(k+1)}$. 
The components of the vector ${\mathbf{V}_i}$ specify marginal values, 
which means user $i$'s value for one unit of spectrum is $v_{i1}$. For two units, it 
is $v_{i1}+v_{i2}$ and so on. We assume that the $i^{\text{th}}$ 
value profile is jointly distributed as the order statistics of $K$ 
random variables that are i.i.d. uniform over $[0, 1]$.

A strategy in this case is $(s_{i1}(v_{i1}),\cdots,s_{iK}(v_{iK})) = ({b_{i1},\cdots,b_{iK}})$, 
which maps the value vector into a bid vector. The components of the bid vector
are equal to marginal bid prices. The multi-unit analogue 
of a second price auction is called Vickrey auction,
where the top $K$ bids are awarded one unit of spectrum each, and if
user $i$ is awarded $k_i$ units of spectrum, then it is charged
an amount equal to the $k_i$ highest losing bids excluding its
own bids. The payoff of the user $i$ is therefore its value for $k_i$
units minus the ask price that it is charged.
For example, if there are 4 units of spectrum and 4 
users with bid profiles 
(21, 15, 5, 3), (32, 18, 15, 10), (25, 23, 15, 12) and (30, 20, 10, 8),
then the top 4 bids are 32, 30, 25, 23. So user 1 gets zero units,
user 2 gets one unit, user 3 gets two units and user 4 gets one unit.
In this example, user 1 gets nothing and pays nothing. 
User 2 pays 21,
user 3 pays 21+20 and user 4 pays 21 as per the payment rules stated before.
The payoffs of the users are respectively equal to 
0, 11, 7 and 9.
It can be shown that for a Vickrey auction, the truthful strategy given by 
$(s_{i1}(v_{i1}),\cdots,s_{iK}(v_{iK})) = ({v_{i1},\cdots, v_{iK}})$ is weakly dominant~\cite{Kr02}.

Quantized Vickrey auctions can be implemented along the same lines as 
truthful matched auctions, but with a few more modifications. The first
difference here is that for each user, posterior updates are carried out on
each of the $K$ marginal bids, which are distributed apriori as the order
statistics of $K$ independent and uniform random variables over $[0,1]$.
Secondly, each user has a separate ask price that can take values in $[0,K]$. 
So each user maintains and updates a separate ask price distribution that is apriori uniform over $[0,K]$. Thirdly, in each round, the $i^{\text{th}}$ 
user sends the $K$-bit vector ${\bf{x}}_{it}$, whose components are calculated using the corresponding marginal bid and the corresponding posterior median. More explicitly, 
the $k^{\text{th}}$ component of ${\bf{x}}_{it}$ is equal to 
$$I_{\text{$k^{\text{th}}$ marginal bid of user $i$ }>\text{ Posterior median of $k^{\text{th}}$ marginal bid of user $i$}}.$$
These $K$ bits are used to update the $K$ marginal bid posteriors of user $i$.
Fourthly, the CA sends $2K+1$ feedback bits to each user. The first $K$ of these 
bits are equal to ${\bf{y}}_{it}$, whose components are the $K$ received bits
so that the users can perform the same updates as the CA.
The next $K$ bits are denoted using ${\bf{u}}_{it}$. The $k^{\text{th}}$ component 
of ${\bf{u}}_{it}$ denotes whether the $i^{\text{th}}$ user won the $k^{\text{th}}$ unit of spectrum or not. The updated bid price estimates are used to decide the winners and the ask price for that spectrum use period, 
as per the rules of Vickrey auction outlined earlier. The last bit of feedback ($\tilde{y}_{it}$) is used to convey the updated ask distribution to the user.
\begin{equation}
\tilde{y}_{it} = I_{\text{User $i$'s ask price in round $t >$ Median of user $i$'s ask distribution in round $t$}}
\label{eq:tildey2}
\end{equation}
While updating the ask price posterior, this bit is again treated by the user and by
the CA as if it has been received from a virtual user through a BSC$_p$. Apart from these four changes, the update procedure and update equations are identical to the truthful 
single-unit auctions.
In Section~\ref{sec:truth_results}, we show by simulation that in our
implementation of quantized Vickrey auctions, the revenue of the CA 
and the payoffs of the users tend to the true theoretical revenue 
and the true payoff in probability. This in turn shows that asymptotically, 
it is weakly dominant for each user to truthfully reveal its value profile.
\section{Matched auctioning with time-varying bids}
\label{sec:movingBidsScheme}
In this section, we return to the matched auctioning algorithm with non-strategic users,
and extend it for time-varying bids.
When the bids are allowed to vary with time, it is possible under matched auctioning, 
that the CA becomes overconfident about its estimates of the bid prices. By this we 
mean that when a user's bid changes after remaining constant for many update-and-allocate rounds, 
the posterior distribution of the bid would be very close to the unit step at the 
previous price. Consequently, the corrections sent by the user would not affect the 
CA's estimate of the price significantly. This would result in the matched auctioning 
algorithm tracking bid prices very slowly, or completely failing to track them. 
So, the CA's revenue could be substantially
lower than the highest bid. In this 
section, the system set-up and the auction scheme are the same as in matched
auctioning, except for the bid-drift model and a significant modification to the posterior 
update step in Section~\ref{sec:matched_posteriorUpdates}. 
\subsection{Bid-drift model}
\label{sec:bidDriftModel}
For each user $i$, bids are represented as independent discrete-time 
random processes $B_i(t)$, and an additive model is used to represent their dynamics.
{
\begin{equation}
 \ B_{i}(t+1)=
\begin{cases}
\min\{\max\{B_i(t)+n_i(t+1), 0 \},1\} & \text{w. p. } q
\\
B_i(t) & \text{w. p. } 1-q
\end{cases}
\label{eqn:drift}
\end{equation} 
}

\noindent
In (\ref{eqn:drift}), $B_i(1)$ is uniformly distributed in $[0,1]$, \mbox{$n_i(t+1)$} is independently uniform over 
a small interval $[-\epsilon,\epsilon]$ and $q$ is the probability that the bid price changes
in the next round. So for any update round $t$ and user $i$,
\mbox{${\bf P}\{ i^{\text{th}}\text{ bid changes at } t+1 \} = q$}, and $q$ is assumed to
be constant and identical for all the users.

\subsection{Posterior update algorithm for time-varying bids}
\label{sec:modifiedPM} 

After each posterior distribution update,
if the posterior distribution of any bid price is sufficiently close to the unit step function
at the respective posterior median, then the CA approximates the distribution as another
distribution that is more spread-out than the unit step function.  Therefore, our main
idea to enable the CA to track moving bids is to perform posterior updates while not
allowing the individual posterior distributions to collapse into the unit step function.
The distribution that we use for approximation must be such that most of the corresponding
density is concentrated about the posterior median but at the same time,
all values in $[0,1]$ have non-zero density. Although the approximation comes
at the cost of the CA not knowing exactly what the bid price is, we show by simulation
that this is effective in achieving revenues that are close to the maximum bid
and outperforms matched auctioning when the bids are time-varying.

As an approximation of the unit step at $b_0 \in [0,1]$, we take a cumulative distribution
function $F(b;b_0,\lambda,\mu)$ with median $b_0$. The corresponding probability density
is denoted using 
$f(b;b_0,\lambda,\mu)$. The shape parameters of the distribution,
$0< \lambda \ll 1$ and $0<\mu \ll 1$, control how close $F$ is to the unit step at $b_0$.
We use a piecewise linear $F$, whose shape is illustrated in Fig.~\ref{fig:Fb0} for a few parameter values. The exact equations for the three cases depicted in
Fig.~\ref{fig:Fb0} are shown in (\ref{eqn:pwl1}), (\ref{eqn:pwl2}) and (\ref{eqn:pwl3}).
\begin{figure}[htb]
\begin{minipage}[b]{1.0\linewidth}
  \centering
  \centerline{\epsfig{figure=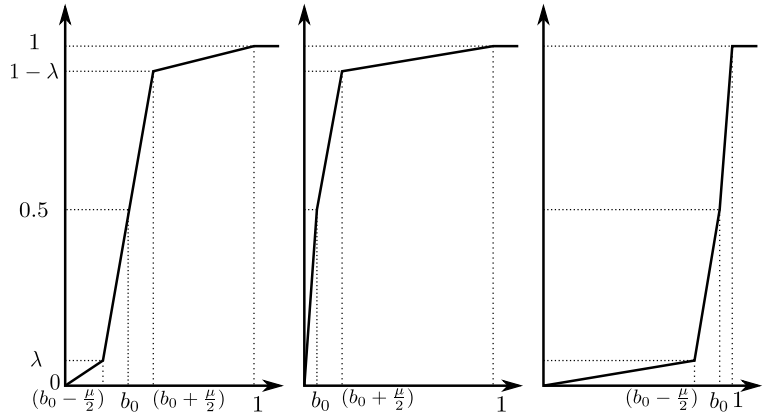,width=10cm}}
\vspace*{-0.25cm}
\end{minipage}
\caption{Shape of $F(b;b_0,\lambda,\mu)$ for 3 possible locations of $b_0$.}
\label{fig:Fb0}
\end{figure}

\noindent
Case 1: If $b_0-\frac{\mu}{2} >0$ and $b_0+\frac{\mu}{2} < 1$, then $F(b;b_0,\mu,\lambda)=$
\begin{equation}
 \
\begin{cases}
\frac{\lambda}{(b_0-\frac{\mu}{2})}b & \text{ if $b \in (0,b_0-\frac{\mu}{2}]$}
\\
 (b-b_0+\frac{\mu}{2})\frac{1-2\lambda}{\mu}+ \lambda & \text{if $b \in (b_0-\frac{\mu}{2}, b_0+\frac{\mu}{2}]$ }
\\
\frac{\lambda}{1-b_0-\frac{\mu}{2}}(b-b_0-\frac{\mu}{2})+1-\lambda & \text{if $b \in (b_0+\frac{\mu}{2},1)$.}
\end{cases}
\label{eqn:pwl1}
\end{equation}
Case 2: If $b_0-\frac{\mu}{2} <0$, then $F(b;b_0,\mu,\lambda)=$
\begin{equation}
 \
\begin{cases}
\frac{b}{2b_0} & \text{ if $b \in (0,b_0]$}
\\
 (b-b_0)\frac{1-2\lambda}{\mu}+ \frac{1}{2} & \text{if $b \in (b_0, b_0+\frac{\mu}{2}]$ }
\\
\frac{\lambda}{1-b_0-\frac{\mu}{2}}(b-b_0-\frac{\mu}{2})+1-\lambda & \text{if $b \in (b_0+\frac{\mu}{2},1)$.}
\end{cases}
\label{eqn:pwl2}
\end{equation}
Case 3: If $b_0+\frac{\mu}{2} >1$, then $F(b;b_0,\mu,\lambda)=$
\begin{equation}
 \
\begin{cases}
\frac{\lambda}{b_0-\frac{\mu}{2}}b & \text{ if $b \in (0,b_0-\frac{\mu}{2}]$}
\\
 (b-b_0+\frac{\mu}{2})\frac{1-2\lambda}{\mu}+ \lambda & \text{if $b \in (b_0-\frac{\mu}{2}, b_0]$ }
\\
\frac{b-b_0}{2(1-b_0)}+\frac{1}{2} & \text{if $b \in (b_0,1)$.}
\end{cases}
\label{eqn:pwl3}
\end{equation}

We assume that at the end of round $t$, the posterior density of $B_i$ is $f_{it}$,
with median $m_i$. For a threshold $\theta>0$, if $D(f_{it}(b) || f(b;m_i,\lambda,\mu)) < \theta$, then we approximate the posterior distribution using $F(b;m_i,\lambda,\mu)$. As a measure of divergence ($D$) we use the Bhattacharyya distance~\cite{Ka67}. 
If $f_1(b)$ and $f_2(b)$ are probability density functions
of continuous random variables, then the Bhattacharyya distance between them is given by
\begin{equation}
D(f_1(b)||f_2(b))=-\log\left(\int \sqrt{f_1(b)f_2(b)} db\right)
\label{eqn:bhattc}
\end{equation}
In the case of discrete probability mass functions $p_1$ and $p_2$, the Bhattacharyya distance between the two mass
functions is
\begin{equation}
D(p_1(b)||p_2(b))=-\log\left(\sum\limits_b\sqrt{p_1(b)p_2(b)}\right)
\label{eqn:bhattd}
\end{equation}

If we use \textsc{updateTrack} for posterior updates, it is possible for the posterior 
median to be substantially larger than the bid price even after many update-and-allocate rounds have been completed. 
This means that if we set the ask price very close to the posterior median of the winner, 
then the winner could  reject spectrum even after many rounds. Therefore, the CA sets the 
ask price to be equal to ${\max}_i  \{M_{i,t+1}\} - \mu$, where $\mu$ is the 
same as shown in Fig.~\ref{fig:Fb0}. In the following round, the $i^{\text{th}}$ user 
replies back to the CA, knowing that the CA's estimate of the posterior median and consequently 
the ask price, is based on this modified procedure. Therefore, the ability to 
track moving bid prices comes at the cost of setting the ask price to a value that is lower 
than in the case of constant bid prices. But in the next section, we see that despite the 
approximation and setting the ask price low, the revenue of the CA is still very close to 
the ideal case of perfect tracking. Therefore, to adjust for tracking, we simply replace
{\sc updateDistribution} in Figs.~\ref{fig:overallAlgo} and~\ref{fig:oneround} with {\sc updateTrack} shown in Fig.~\ref{fig:pcodeDrift} and use $\mu$ in place of $h$ while setting the ask price.
\begin{figure}
\fbox{
 \addtolength{\linewidth}{-2\fboxsep}%
 \addtolength{\linewidth}{-2\fboxrule}%
\begin{minipage}{\linewidth}
\begin{spacing}{1}
\begin{algorithmic}
{
\Function{$F_{it}\gets$\textsc{updateTrack}}{$F_{i,t-1},y_{it}$}

  \State $F_{it} \leftarrow \textsc{updateDistribution}(F_{i,t-1},y_{it})$
  \State $m_{i,t+1} \gets \text{Median}(F_{it})$
  \State $f_{it} \gets$ density corresponding to $F_{it}$
  \If{$D\big(f_{it}(\cdot)||f(\cdot| m_{i,t+1},\lambda,\mu)\big)<\theta$}  
  
    $F_{it}(\cdot)\gets F(\cdot\ ; m_{i,t+1},\lambda,\mu) $
  \EndIf
\EndFunction
}
\end{algorithmic}
\end{spacing}
\end{minipage}
}
\caption{Posterior updates adjusted for bid-drift.}
\label{fig:pcodeDrift}
\end{figure}

\section{Simulation set-up and results}
\label{sec:simulations}
In Section~\ref{sec:matched_posteriorUpdates}, we have derived recurrence relations for updating
the posterior distribution of bid prices. But deriving these updates in closed form
is difficult even for simple prior distributions. In order to circumvent this problem
and to account for prices being discrete, we discretize the $[0,1]$ interval such that bid prices are
integer multiples of $\Delta$.
\subsection{Convergence of matched auctioning}
\label{sec:simconv}
Fig.~\ref{fig:matchedConv10} shows the convergence of the revenue to $B_{(N)}-h$, with $\Delta=10^{-5}$ and $h=10^{-3}$. Here again, $B_{(N)}$ is used to
denote the maximum of the $N$ bids.
We show for a few values of $\delta$, that \mbox{${\bf P}\{|\text{Revenue at time }t - B_{(N)}+h| < \delta \} \rightarrow 1$}, where probabilities are estimated as empirical probabilities over $R=1000$ independent joint realizations of bid prices and channel outputs. As opposed to this behavior, we see that the mean revenue for our 
first scheme of unmatched auctioning explained in 
Section~\ref{sec:unmatched} is significantly smaller than 
the mean maximum bid. This is shown in Fig.~\ref{fig:unmatched} along 
with standard error bars.
The length of the error bars in each round is equal to ${2\hat{\sigma}}/{\sqrt{R}}$, where $\hat{\sigma}$ is the estimated standard deviation of the revenue in that particular round, and $R=10^4$ is the number of Monte-Carlo rounds over which the averaging was performed.
\begin{figure}[htb]
\begin{minipage}[b]{1.0\linewidth}
  \centering
  \centerline{\epsfig{figure=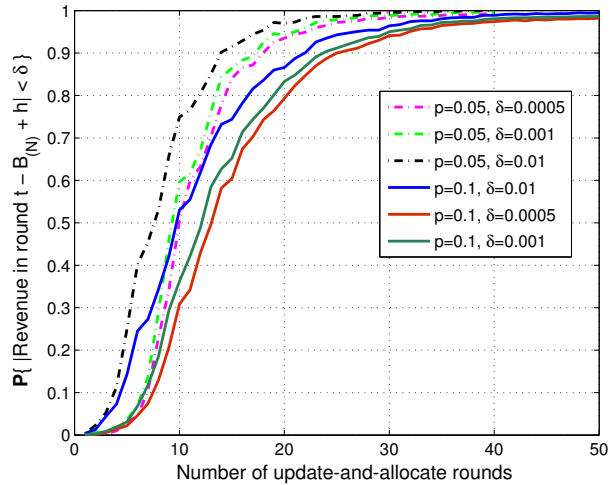,width=9cm}}
\vspace*{-0.25cm}
\end{minipage}
\caption{Convergence of revenue for matched auctioning with $N=10$.}
\label{fig:matchedConv10}
\end{figure}

\begin{figure}[htb]
\begin{minipage}[b]{1.0\linewidth}
  \centering
  \centerline{\epsfig{figure=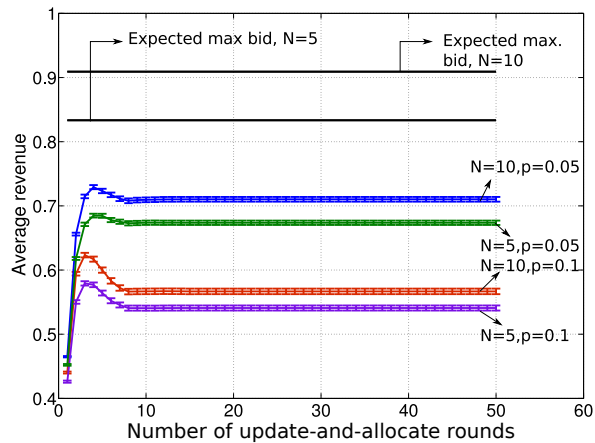,width=9cm}}
\vspace*{-0.25cm}
\end{minipage}
\caption{Average revenue of unmatched auctioning with error bars for different values of $N$ and $p$.}
\label{fig:unmatched}
\end{figure}
\subsection{Convergence of auctions with strategic users}
\label{sec:truth_results}
From Sections~\ref{sec:truthful_matched_auction} and~\ref{sec:vickrey_auction},
we can see that truthful matched auctioning is
exactly the same as quantized Vickrey auctions for $K=1$.
We show by simulation the convergence properties of quantized Vickrey auctions for 10 users and $K=1$ or $K=4$ units.
The left panel of Fig.~\ref{fig:truthful_revenue_payoff} illustrates 
the convergence of the CA's revenue to the true theoretical revenue 
in probability. Unlike matched auctioning
where all the probabilities start very close to zero, 
we sometimes have non-zero probabilities starting at the very 
first round since in Vickrey auctions, we do not allow winners 
to reject spectrum.
We also show that the average user payoff converges to the true theoretical average, where the averaging is done
over users. This is depicted in the right
panel of Fig.~\ref{fig:truthful_revenue_payoff}. From this figure
we infer that asymptotically, quantized Vickrey auctions behave
identically to Vickrey auctions. Therefore, as the number of
auction rounds increases, it is weakly dominant for the users
to reveal their bids truthfully.
\begin{figure*}[htb]
\begin{minipage}[b]{1.0\linewidth}
  \centering
  \centerline{\epsfig{figure=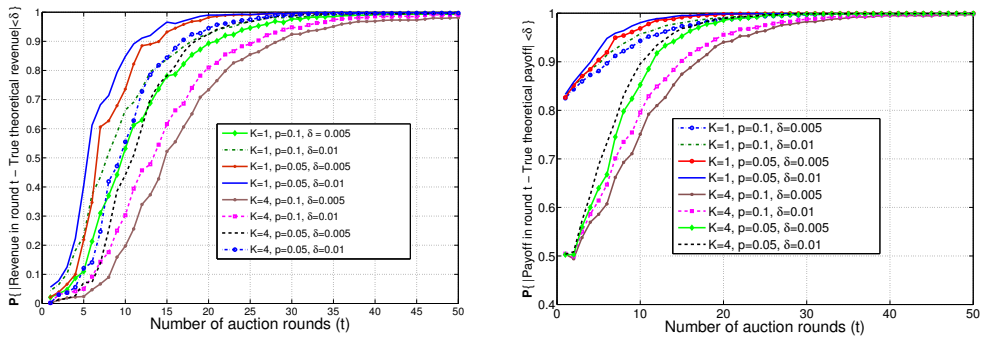,width=14cm}}
\vspace*{-0.25cm}
\end{minipage}
\caption{Convergence of revenue and payoff to their true theoretical 
values in quantized Vickrey auctions for $N=10$ users.}
\label{fig:truthful_revenue_payoff}
\end{figure*}
\vspace*{-0.5cm}
\subsection{Comparison between matched auction and matched auction adjusted for bid-drift}
\label{sec:trackVnot}
When bids are allowed to vary with time according to the model proposed in Section~\ref{sec:bidDriftModel}, we show that the tracking algorithm proposed in Section~\ref{sec:modifiedPM} performs better than
the case when there is no tracking.
In order to compare the tracking algorithm with matched auctioning, 
we compare the {\em efficiency ratio}, which is the ratio of the average 
revenue to the average 
maximum bid, averaged over the update-and-allocate rounds and then averaged over Monte Carlo rounds. 
If the CA is able to perfectly track the bids, then we expect this ratio to be 1 for 
all values of $q$ and $\epsilon$, defined in Section~\ref{sec:bidDriftModel}.
But in reality, for $N=5$, we observe the behavior shown in Fig.~\ref{fig:trackVsNot}. The left and right
panels of this figure depict the ratio as a function 
of $q$ for $\epsilon=0.01$ and as a function of $\epsilon$ for $q=0.02$, respectively.
In these experiments we use $\Delta=1/5000$, $h=1/1000$, and for the piecewise linear approximation shown in Fig.~\ref{fig:Fb0}, we take the parameters to be $\lambda=0.005$ and $\mu=0.005$. We take the threshold on the Bhattacharyya distance $\theta=0.3$. 
\begin{figure*}[htb]
\begin{minipage}[b]{1.0\linewidth}
  \centering
  \centerline{\epsfig{figure=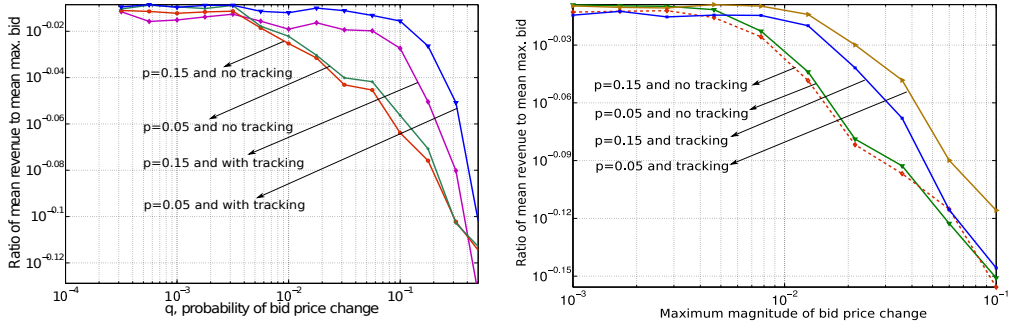,width=14cm}}
\vspace*{-0.25cm}
\end{minipage}
\caption{${\text{(Avg. revenue)}}/{\text{(Avg. maximum bid)}}$ vs. $q$, for $\epsilon=0.01$ (left panel), and vs. $\epsilon$,
for $q=0.02$ (right panel)}
\label{fig:trackVsNot}
\end{figure*}
We see from the left panel of Fig.~\ref{fig:trackVsNot} that 
with tracking, the ratio is still very close to 1 till around $q=0.1$.
We also observe that the improvement achieved by matched auctioning with tracking is most pronounced when $q$ is between $10^{-2}$ and $10^{-1}$. For very small values of $q$, the bids do not drift too much and the two methods are equally good. In contrast, for very large values of $q$, both methods are unable to track effectively.
The right panel of Fig.~\ref{fig:trackVsNot} also shows 
matched auctioning with tracking performing
better than without tracking for \mbox{$\epsilon > 5\times10^{-3}$}.
\subsection{Sensitivity of tracking algorithm to parameter settings}
\label{sec:paramsweep}
In this subsection, we examine the sensitivity of the efficiency ratio with respect to
the tracking parameters $\lambda$, $\theta$ and $\mu$. For these simulations, we fix $N=5$, $p=0.05$ and
sweep over one of the parameters while keeping the other two constant. For the case where we sweep 
$\lambda$ over the interval $[0.001, 0.01]$, the results show very little sensitivity to the value of $\lambda$ as seen in 
the left panel of Fig.~\ref{fig:sweepLambdaThetaMu}.
\begin{figure*}[htb]
\begin{minipage}[b]{1.0\linewidth}
  \centering
  \centerline{\epsfig{figure=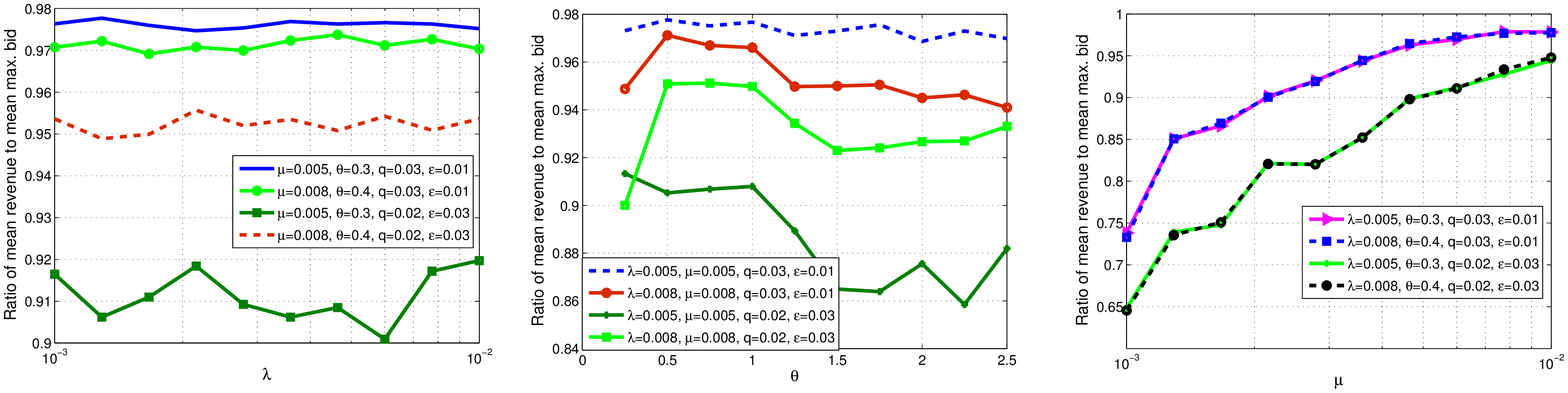,width=17cm}}
\vspace*{-0.25cm}
\end{minipage}
\caption{${\text{(Avg. revenue)}}/{\text{(Avg. maximum bid)}}$ vs. $\lambda$ (left panel), $\theta$ (middle panel) and $\mu$ (right panel) for $N=5$ and $p=0.05$.}
\label{fig:sweepLambdaThetaMu}
\end{figure*}
Similarly, the plot in the middle panel of Fig.~\ref{fig:sweepLambdaThetaMu}
does not show much sensitivity to $\theta$.
In contrast to these results, while sweeping $\mu$ over $[0.001, 0.01]$, we observe that for small
values of $\mu$, the performance is significantly degraded as seen in the right panel of Fig.~\ref{fig:sweepLambdaThetaMu}. 
The reason for this is that low values of $\mu$ result in $F$ being too close to the unit step, which causes tracking to be very slow. Moreover, small values of $\mu$ would result 
in more rejections by the winner if the bid price happens to decrease. 
\vspace*{-0.25cm}

\subsection{Comparison of overheads in open-loop and closed-loop schemes}
\label{sec:appendix_openVsclosed}
The bits used for bid revelation go through a noisy channel. Hence
they must be encoded to correct errors.
Without encoded bits, the scheme would correspond to
unmatched auctioning which is highly inefficient as seen in Section~\ref{sec:simconv}.
The encoding can either use channel output feedback, like we have done
in our scheme, or ignore feedback and use an open-loop scheme.
We argue here that if a state-of-the-art open-loop scheme is used,
then the number of bits required for 
reliable bid revelation is much higher than for our proposed scheme.
For simplicity, we consider the case of a single unit auction 
and a single user revealing its bid to the CA.
Open-loop schemes in the form of block codes encode a message
that is $k$ bits long into a codeword of $n>k$ bits. The receiver
uses a noisy received codeword to determine which one of the $2^k$ 
equally likely messages was sent.
The message point perspective cannot be directly compared with the 
block coding perspective since the message point is a continuous random
variable, whereas the message in block coding is discrete. Nevertheless, 
we endeavor to develop a fair 
comparison by mapping the message point perspective to the block coding perspective here.
For simulating posterior matching, we divide the $[0,1]$
interval into $K$ equally spaced intervals. Now the message point
is a discrete random variable whose observed value can be one of
$K$ equally likely messages. If we use
$n$ transmissions, then we define the rate of the code to be $\log_2(K)/n$.
Popular open-loop codes like LDPC codes use the definition of rate to be $k/n$,
where the message is $k$ bits long and the codeword is $n$ bits long. The
code is evaluated using the frame error rate (FER), which is the probability
that a message would get decoded into an incorrect message by the receiver. 
Similarly, in the message point perspective, we say that a frame error 
occurs when $|\theta - \hat{\theta}| > 1/(2K)$, where the message point is
 $\theta$ and the estimate of the message point after $n$ transmissions is
  $\hat{\theta}$. Here $\hat{\theta}$ is the posterior median after $n$ transmissions.

One problem with this comparison is due to the usage of a
discrete posterior distribution, whereas posterior matching needs a 
continuous posterior distribution for its convergence results. 
Therefore, our simulations--which are actually an approximation 
of the posterior matching scheme--would give FERs that are an upper 
bound on the FERs with a continuous posterior distribution.
Table~\ref{tab:comparison} shows a comparison of the closed-loop 
scheme with an LDPC code depicted in Fig. 2 of~\cite{XiBa07}. 
In the table, the second column shows the error rates of the 
open-loop code from~\cite{XiBa07} and the third column shows 
the error rate of the code used in our paper.  The open-loop code 
has parameters $k=35$, $n=210$ and a rate of 35/210. 
The rate is lower--and hence more conservative--than the rate of our 
closed-loop code, which has rate $\log_2(K)/n = 0.332$, for 
$K = 100000$ and $n = 50$. From the table, we infer that the closed-loop 
scheme does better in terms of FER even when 
we use a discrete approximation and for a much smaller block length.
Hence multi-bit versions
of our scheme would outperform open loop schemes and would also scale
better for applications such as multi-unit auctions and time-varying bid
prices.
{
\begin{table*}
 \centering
 {\small
    \caption{Comparison of Open-Loop and Closed-Loop Schemes}
    \begin{tabular}{| c | c | c |}
    \hline
    $p$ & FER open-loop (code 2 from~\cite{XiBa07}) & FER closed-loop (from our simulations) \\ \hline
    0.05 & $>10^{-1}$ & $0.018$ \\
    0.06 & $>10^{-1} $ & $0.031$ \\
    0.1 & $>10^{-1} $ & $0.049$ \\
    \hline
    \end{tabular}
    \vspace{0.25in}
    \label{tab:comparison}
    }
    \end{table*}
}

\section{Conclusion}
\label{sec:conclusion}
We have presented a realistic micro-level view of auctions in secondary
spectrum markets by explicitly modeling the process by which bidders convey
their bids to a clearing authority. Specifically, we have modeled quantization 
and noise for the first time in this context. In the constant bid
scenario, we have proved that our scheme is optimal in the sense of asymptotically maximizing 
the CA's revenue. We have also 
extended the scheme to accommodate strategic bidders,
to auction multiple spectrum units among strategic bidders and to track 
slowly varying bid prices. 
For the case of strategic users, we develop quantized auction schemes that make
truthful bidding a weakly dominant strategy.
Our simulations verify the theoretical results that we proved in Section~\ref{sec:matched_convergence}. 
Further, the simulations also show the effectiveness of our tracking procedure 
and its robustness to different parameters of the bid-drift model. Our extensions illustrate the importance of low rate feedback since our schemes scale well in both situations, whereas open-loop schemes would have prohibitive overheads.
\vspace*{-0.5cm}
\section{Appendix}
\subsection{Update equations for matched auctioning}
\label{sec:derivation}

We assume that the discretization interval for the bid prices is $\Delta$.
In the continuous case, $\Delta = 0$.  In the discrete case, $\Delta > 0$,
and the bid prices are integer multiples of $\Delta$ with probability 1.

We fix a user $i$ and derive the procedure for computing the posterior 
distribution $F_{i,t}$ from $F_{i,t-1}$ and $y_{i,t}$. To make the notation
lighter, we drop the index $i$ and denote the posteriors by $F_t$ and $F_{t-1}$,
respectively, and the bit received by the CA from the $i$-th user in round $t$
by $y_t$.  We denote by $m_t$ the specific realization of the median of 
$F_{t-1}$, computed by the CA.  We let $m'_t = m_t - \Delta$.

To relate $F_t$ to $F_{t-1}$ and $y_t$, we use the Bayes rule and the total probability
theorem:
\begin{align}
F_t(b) =&
\mathbf{P}\{B\leq b | {\bf Y}_t\}
\nonumber \\
=& \mathbf{P}\{B\leq b | y_t, {\bf Y}_{t-1}\}
\nonumber \\
=& \dfrac{\mathbf{P}\{y_t | B\leq b, {\bf Y}_{t-1}\} \mathbf{P}\{B\leq b | {\bf Y}_{t-1}\}}
         {\mathbf{P}\{y_t | {\bf Y}_{t-1}\}}
\nonumber \\
=& \dfrac{\mathbf{P}\{y_t | B\leq b, {\bf Y}_{t-1}\} F_{t-1}(b)}
         {\mathbf{P}\{y_t | {\bf Y}_{t-1}\}} 
\nonumber \\
=& \mbox{$\frac{\mathbf{P}\{y_t | B\leq b, {\bf Y}_{t-1}\} F_{t-1}(b)}
         {\mathbf{P}\{y_t | B\leq b, {\bf Y}_{t-1}\} F_{t-1}(b) + 
          \mathbf{P}\{y_t | B> b, {\bf Y}_{t-1}\} [1-F_{t-1}(b)]}$.}
\label{eq:Bayes}
\end{align}
We now evaluate the two terms that occur in the denominator of (\ref{eq:Bayes}).
We consider two cases separately: $b<m_t$ and $b\geq m_t$.
\\
{\bf Case 1:} $b<m_t$.\\
In this case, event $B\leq b$ implies that the bid price $B$ is below the median $m_t$,
and therefore $x_t = 0$.  Hence, the first term in the numerator of (\ref{eq:Bayes})
can actually be rewritten as follows:
\begin{equation}
\mathbf{P}\{y_t | B\leq b, {\bf Y}_{t-1}\} = \mathbf{P}\{y_t | x_t = 0, B\leq b, {\bf Y}_{t-1}\}.
\label{eq:aux1}
\end{equation}
Recall that we assume the channel noise to be both temporally independent and independent
of the users' messages.  Therefore, given $x_t=0$, the received bit $y_t$ is conditionally
independent of both the bid price $B$ and all the past received messages ${\bf Y}_{t-1}$.
This means that (\ref{eq:aux1}) can be rewritten as $\mathbf{P}\{ y_t | x_t = 0\}$.
This is equal to the probability of error in BSC$_p$ if $y_t = 1$ and to the probability
of correct reception if $y_t = 0$.  We denote this quantity by $r_t$:
\begin{equation}
\mathbf{P}\{y_t | B\leq b, {\bf Y}_{t-1}\} = py_t + (1-p)(1-y_t) \equiv r_t.
\label{eq:aux2}
\end{equation}
The second term in the denominator of (\ref{eq:Bayes}) is
\begin{align}
& \mathbf{P}\{y_t | B> b, {\bf Y}_{t-1}\} \mathbf{P}\{B>b | {\bf Y}_{t-1} \}
= \mathbf{P}\{y_t, B> b | {\bf Y}_{t-1} \}
\nonumber\\
& = \mathbf{P}\{y_t, b<B<m_t | {\bf Y}_{t-1} \} + \mathbf{P}\{y_t, B\geq m_t | {\bf Y}_{t-1} \}
\nonumber\\
& = \mathbf{P}\{y_t | b<B<m_t, {\bf Y}_{t-1} \}\mathbf{P}\{b<B<m_t | {\bf Y}_{t-1} \}
\nonumber\\
& + \mathbf{P}\{y_t | B\geq m_t, {\bf Y}_{t-1} \}\mathbf{P}\{B\geq m_t | {\bf Y}_{t-1} \}
\nonumber\\
& = r_t \left[F_{t-1}(m'_t) - F_{t-1}(b) \right] + (1 - r_t)\left[1 - F_{t-1}(m'_t) \right]
\nonumber\\
& = -r_tF_{t-1}(b) + (2r_t - 1)F_{t-1}(m'_t)+1-r_t.
\label{eq:aux3}
\end{align}
We now substitute (\ref{eq:aux2}) and (\ref{eq:aux3}) back into (\ref{eq:Bayes})
to obtain the update formula for the case $b<m_t$
\begin{align}
F_{t}(b) =& \dfrac{r_tF_{t-1}(b)}{1-r_t + (2r_t - 1)F_{t-1}(m'_t)}
\nonumber\\
=&
\left\{\begin{array}{ll}
       \dfrac{(1-p)F_{t-1}(b)}{p + (1-2p)F_{t-1}(m'_t)} & \mbox{if } y_t=0,\\ 
       \dfrac{pF_{t-1}(b)}{1-p + (2p - 1)F_{t-1}(m'_t)} & \mbox{if } y_t=1.
       \end{array}\right.
\label{eq:cases13}
\end{align}
\\
{\bf Case 2:} $b\geq m_t$.\\
We proceed similarly to evaluate the two terms in the denominator of (\ref{eq:Bayes}).
We start with the second term:
\begin{align}
\mathbf{P}\{ y_t | B>b, {\bf Y}_{t-1}\} =& \mathbf{P}\{ y_t | x_t = 1 \}
\nonumber\\
 =& (1-p) y_t + p(1-y_t)
\nonumber\\
 =& 1-r_t.
\label{eq:aux4}
\end{align}
For the first term, we have:
\begin{align}
& \mathbf{P}\{y_t | B\leq b, {\bf Y}_{t-1}\} F_{t-1}(b) =
  \mathbf{P}\{y_t, B\leq b | {\bf Y}_{t-1}\} \nonumber \\
& = \mathbf{P}\{y_t, m_t \leq B\leq b | {\bf Y}_{t-1}\} +
  \mathbf{P}\{y_t, B < m_t | {\bf Y}_{t-1}\} \nonumber \\
& = \mathbf{P}\{y_t | m_t \leq B\leq b, {\bf Y}_{t-1}\}\mathbf{P}\{m_t \leq B\leq b | {\bf Y}_{t-1}\}
\nonumber \\
& + \mathbf{P}\{y_t | B < m_t, {\bf Y}_{t-1}\}\mathbf{P}\{B < m_t | {\bf Y}_{t-1}\} \nonumber \\
& = (1-r_t)\left[F_{t-1}(b) - F_{t-1}(m'_t) \right] + r_tF_{t-1}(m'_t) \nonumber \\
& = (1-r_t)F_{t-1}(b) + (2r_t - 1)F_{t-1}(m'_t).
\label{eq:aux5}
\end{align}
Substituting (\ref{eq:aux4}) and (\ref{eq:aux5}) into (\ref{eq:Bayes}), we obtain
the update formula for the case $b\geq m_t$:
\begin{align}
F_t(b) =&
\mbox{$\frac{(1-r_t)F_{t-1}(b) + (2r_t - 1)F_{t-1}(m'_t)}
      {(1-r_t)F_{t-1}(b) + (2r_t - 1)F_{t-1}(m'_t) + (1-r_t)(1-F_{t-1}(b))}$}
\nonumber\\
=&
\dfrac{(1-r_t)F_{t-1}(b) + (2r_t - 1)F_{t-1}(m'_t)}
      {1-r_t + (2r_t - 1)F_{t-1}(m'_t)}
\nonumber\\
=&
\left\{\begin{array}{ll}
       \dfrac{pF_{t-1}(b)+ (1-2p)F_{t-1}(m'_t)}{p + (1-2p)F_{t-1}(m'_t)} & \mbox{if } y_t=0,\\ 
       \dfrac{(1-p)F_{t-1}(b) + (2p - 1)F_{t-1}(m'_t)}{1-p + (2p - 1)F_{t-1}(m'_t)} & \mbox{if } y_t=1.
       \end{array}\right.
\label{eq:cases14}
\end{align}
{\footnotesize
\bibliographystyle{IEEEbib}
{
\bibliography{references}
}
}
\end{document}